\newcommand{\fA}{\mathcal A}
\newcommand{\Vs}{V_{\operatorname{semi}}}
\newcommand{\Es}{E_{\operatorname{semi}}}
\newcommand{\ndeg}{\operatorname{deg}}
\newcommand{\rank}{\operatorname{rank}}
\newcommand{\EdgelistP}{Edge-list variant of a node-edge-checkable problem}
\newcommand{\NodelistP}{Node-list variant of a node-edge-checkable problem}
\newcommand{\mset}{\psi}
\newcommand{\mcN}{\mathcal{N}}
\newcommand{\mcE}{\mathcal{E}}
\newcommand{\mcL}{\mathcal{L}}
\newcommand{\Otilde}{\widetilde{O}}
\newcommand{\hin}{h_{\operatorname{in}}}
\newcommand{\hout}{h_{\operatorname{out}}}
\newcommand{\edgedeg}{\operatorname{edge-degree}}
\renewcommand{\kappa}{10}
\DeclarePairedDelimiter\ceil{\lceil}{\rceil}
\newtheorem{theorem}{Theorem}
\newtheorem{lemma}[theorem]{Lemma}
\newtheorem*{property*}{Property}
\newtheorem{definition}[theorem]{Definition}
\date{}
\title{Towards Optimal Deterministic LOCAL Algorithms on Trees}
\author[1]{Sebastian Brandt}
\author[1]{Ananth Narayanan}
\affil[1]{CISPA Helmholtz Center for Information Security}
\begin{document}
\maketitle

\begin{abstract}
    While obtaining optimal algorithms for the most important problems in the LOCAL model has been one of the central goals in the area of distributed algorithms since its infancy, tight complexity bounds are elusive for many problems even when considering \emph{deterministic} complexities on \emph{trees}.
    We take a step towards remedying this issue by providing a way to relate the complexity of a problem $\Pi$ on trees to its truly local complexity, which is the (asymptotically) smallest function $f$ such that $\Pi$ can be solved in $O(f(\Delta) + \log^* n)$ rounds.
    More specifically, we develop a transformation that takes an algorithm $\mathcal A$ for $\Pi$ with a runtime of $O(f(\Delta) + \log^* n)$ rounds as input and transforms it into an $O(f(g(n)) + \log^* n)$-round algorithm $\fA'$ on trees, where $g$ is the function that satisfies $g(n)^{f(g(n))} = n$.
    If $f$ is the truly local complexity of $\Pi$ (i.e., if $\fA$ is asymptotically optimal), then $\fA'$ is an asymptotically optimal algorithm on trees, conditioned on a natural assumption on the nature of the worst-case instances of $\Pi$.
    
    Our transformation works for any member of a wide class of problems, including the most important symmetry-breaking problems.
    As an example of our transformation we obtain the first strongly sublogarithmic algorithm for $(\text{edge-degree + 1})$-edge coloring (and therefore also $(2 \Delta - 1)$-edge coloring) on trees, exhibiting a runtime of $O(\log^{12/13} n)$ rounds.
    This breaks through the $\Omega(\log n/\log \log n)$-barrier that is a fundamental lower bound for other symmetry-breaking problems such as maximal independent set or maximal matching (that already holds on trees), and proves a separation between these problems and the aforementioned edge coloring problems on trees.

    We extend a subset of our results to graphs of bounded arboricity, obtaining, for instance, the aforementioned upper bound for edge colorings also on planar graphs.
\end{abstract}
\pagenumbering{gobble}
\clearpage
\pagenumbering{arabic}

\section{Introduction}
Since its beginning in the 1980s~\cite{linial1987distributive}, one of the central objectives of research on distributed graph algorithms has been to improve the best known lower and upper bounds for the complexities of the most important problems in the field, coming closer and closer to the ultimate goal of obtaining (asymptotically) tight bounds.
Much of the research has focused on a setting that we will also consider in this work: the model of computation is the standard LOCAL model of distributed computing~\cite{linial1992locality,peleg2000}, the considered problem class is the class of \emph{locally checkable\footnote{Informally speaking, a locally checkable problem is a problem that can be defined via local constraints such that a global solution to the problem is correct if and only if the local constraints are satisfied around each node or edge. A simple example are proper coloring problems where the local constraints are of the form that the two endpoints of an edge have to have different colors. For formal definitions, see \Cref{sec:prelims}.} problems}~\cite{naor1995what,brandt2019automatic}, and complexities are studied as a function of the number $n$ of the nodes of the input graph.
For this setting, the last decade has seen a tremendous amount of progress towards closing the gap between lower and upper bounds: on the lower bound side, the round elimination technique~\cite{brandt2016lower,brandt2019automatic} provided a new, powerful tool for proving lower bounds and is responsible for most of the state-of-the-art lower bounds known for fundamental problems~\cite{brandt2019automatic,chang2019exponential,chang2019distributed,brandt2020truly,balliu2021lower,balliu2022distributed}, while, on the upper bound side, techniques based on network decompositions and rounding considerably advanced the state of the art (see, e.g., \cite{fischer2020improved,rozhovn2020polylogarithmic,ghaffari2021improved,ghaffari2022deterministic,faour2023local,ghaffari2023faster,ghaffari2023improved}) and yield the current best known complexity upper bounds for many problems\cite{ghaffari2024near}.

However, despite all of these efforts, to the best of our knowledge there are only two natural\footnote{We note that it is possible to artificially construct problems in a way that makes it easy to prove tight bounds for the obtained problem but we are not aware of any such problem that has been studied in its own right independently of this property.} problems for which nontrivial\footnote{We call a tight bound ``trivial'' if it is of size $\Theta(0)$, $\Theta(1)$, or $\Theta(n)$} tight bounds are known (on general graphs): sinkless orientation, for which a tight bound of $\Theta(\log n)$ is known~\cite{ghaffari2017distributed,chang2019exponential} and $(2,2)$-ruling edge set, for which a tight bound of $\Theta(\log^* n)$ is known~\cite{linial1992locality,kuhn2018deterministic}.
In particular, for none of the fundamental symmetry-breaking problems, tight bounds are known; for instance, the state-of-the-art bounds for MIS and maximal matching are $\Omega(\log n/\log \log n)$ rounds~\cite{balliu2021lower} and $\Otilde(\log^{5/3} n)$ rounds\footnote{We will use $\Otilde(\cdot)$ to hide factors logarithmic in the argument, i.e., in this case $(\log \log n)$-factors.}~\cite{ghaffari2024near}, while, for $(\Delta + 1)$-coloring and $(2\Delta - 1)$-edge coloring, they are $\Omega(\log^* n)$ rounds~\cite{linial1992locality} and $\Otilde(\log^{5/3} n)$ rounds~\cite{ghaffari2024near}.

Perhaps, this might not come as a surprise when considering the state of the art in the simpler setting of \emph{deterministic} complexities on \emph{trees}: even there, tight bounds are still elusive for many fundamental problems, including the aforementioned coloring problems.
Two exceptions are the maximal independent set (MIS) problem and the maximal matching problem for which tight bounds of $\Theta(\log n/\log \log n)$ rounds on trees are known~\cite{barenboim2010sublogarithmic,barenboim2013distributed,balliu2022distributed}.
 It stands to reason that we need to understand the situation on trees before we can have hope to obtain tight bounds on general graphs (in particular in light of the fact that all lower bounds obtained via round elimination also hold on trees), raising the following fundamental question.

\vspace{0.15cm}
\begin{tcolorbox}
	
	\textbf{Open Question}
    
    \noindent How can we obtain \emph{optimal} (deterministic) algorithms on trees?
\end{tcolorbox}

\paragraph{Truly local complexities.}
While understanding the complexity of a problem \emph{as a function of $n$} has been the main focus of research on distributed graph algorithms, a second research direction that has received a lot of attention is to understand the complexity of a problem \emph{as a function of the maximum degree $\Delta$ of the input graph}, where only a minimal additive dependency on $n$ is allowed.
More concretely, a large number of works~\cite{panconesi2001some,kuhn2006complexity,barenboim2014distributed,barenboim2016deterministic,fraigniaud2016local,balliu2020distributed,brandt2020truly,kuhn2020faster,maus2020local,balliu2021improved,balliu2021lower,balliu2022distributed,balliu2022edge,balliu2022ruling} have studied what is known as the \emph{truly local complexity} of a problem~\cite{maus2020local}:
a function $f: \mathbb{Z}_{\geq 0} \rightarrow \mathbb{Z}_{\geq 0}$ is called the \emph{truly local complexity} of a problem $\Pi$ if there exists some algorithm that solves $\Pi$ in $O(f(\Delta) + \log^* n)$ rounds but no algorithm that solves $\Pi$ in $o(f(\Delta)) + O(\log^* n)$ rounds.
The choice of the additive $\log^*$-dependency\footnote{The function $\log^*$ is defined as the minimum number of times the $\log$-function has to be applied recursively to the argument to obtain a value that is at most $1$.} on $n$ in the definition of the truly local complexity is due to the fact that for a large number of problems, including all of the fundamental symmetry-breaking problems, such a dependency is unavoidable due to~\cite{linial1992locality}, while for many problems an algorithm with runtime $O(f(\Delta) + \log^* n)$ rounds indeed exists for some function $f$.

As with complexities as a function of $n$, the main objective of research on the truly local complexity has been to come closer to obtaining tight bounds by improving lower and upper bounds.
While, for problems such as $(\Delta + 1)$-coloring and $(2 \Delta - 1)$-edge coloring, the truly local complexity is still wide open, tight bounds have been achieved for other problems, such as a bound of $\Theta(\Delta)$ for MIS~\cite{barenboim2014distributed,balliu2022distributed} and maximal matching~\cite{panconesi2001some,balliu2021lower}.
But even for the two coloring problems, there has been ample improvement regarding bounds on the truly local complexity~\cite{panconesi2001some,kuhn2006complexity,barenboim2014distributed,barenboim2016deterministic,fraigniaud2016local,kuhn2020faster,maus2020local,balliu2022edge}: while for both problems no superconstant lower bound on the truly local complexity is known, the currently best known upper bound for the truly local complexity of $(\Delta + 1)$-coloring stands at $O(\sqrt{\Delta \log \Delta})$ rounds~\cite{maus2020local}, and for $(2\Delta - 1)$-coloring even at $O(\log^{12} \Delta)$ rounds~\cite{balliu2022edge}.
We remark that these bounds hold also for $(\deg + 1)$-coloring and $(\text{edge-degree} + 1)$-edge coloring, respectively.
In general, it appears that making progress on bounds for the truly local complexity of a problem may be (perhaps considerably) easier than doing so for bounds for the complexity of the problem on trees (as a function of $n$).

In fact, the round elimination technique mentioned above as a lower bound tool can similarly be used to prove upper bounds (as elaborated upon, e.g., in~\cite[Section 1.2]{balliu2021improved}.
In either case, when considering a problem that can be solved in $O(f(\Delta) + \log^* n)$ rounds for some function $f$, the bounds that round elimination produces are inherently bounds \emph{on the truly local complexity}; any bounds obtained as a function of $n$ are obtained via further manipulation of these bounds.
As such there already exists a technique specifically for the task of proving bounds on the truly local complexity---unlike for the task of proving bounds as a function of $n$ on trees.

\subsection{Our Contributions}
As our main contribution, we develop a transformation that relates complexities as a function of $n$ on trees to truly local complexities.
More specifically, for any problem $\Pi$ from a large class of problems, our transformation takes an algorithm for $\Pi$ with a deterministic runtime guarantee of the form $O(f(\Delta) + \log^* n)$ rounds for some (monotonically non-decreasing, non-zero) function\footnote{When considering complexity functions, we will for simplicity assume that they are (monotonically non-decreasing) continuous functions from the space of nonnegative reals to the space of nonnegative reals that map $0$ to $0$ (as this simplifies matters, e.g., the definition of the function $g$, considerably). When it is necessary to transform a traditional complexity function going from $\mathbb{Z}_{\geq 0}$ to $\mathbb{Z}_{\geq 0}$ into this form, simply choose any extension to the nonnegative reals that does not violate the monotonicity (and leaves all already defined function values unchanged). Moreover, we call a function \emph{non-zero} if there is at least one positive real number at which the function has a positive value.} $f$ as input and returns a deterministic algorithm with a runtime guarantee of $O(f(g(n)))$ rounds on trees, where $g$ is the function satisfying $g(n)^{f(g(n))} = n$.\footnote{We remark that the fact that $f$ is continuous, monotone and non-zero (together with $f(0) = 0$) implies that $g$ exists and is unique.}

Our transformation is applicable to two classes of locally checkable problems, which, roughly speaking, can be described as follows.
The first class $\mathcal P_1$ contains all locally checkable\footnote{We already note that, for technical reasons it will be required that the problems are given in the so-called \emph{node-edge-checkability} formalism, which is presented formally in~\Cref{sec:prelims}.} node-labeling problems for which there exists a sequential algorithm that solves the problem---even when the input graph comes with a correct partial solution---by assigning an output label for each node (presented to the algorithm in an adversarial order) by only taking the $1$-hop neighborhood (including the outputs chosen so far for nodes in the neighborhood) into account.\footnote{For readers familiar with the SLOCAL model~\cite{ghaffari2017complexity}, informally speaking, this can be thought of as the class of problems for which there exists an algorithm with locality $1$ in the SLOCAL model that also works for a (suitably defined) list version of the problem.}
Amongst others, this class contains the fundamental problems of MIS, $(\Delta + 1)$-coloring, and $(\deg + 1)$-coloring.

Dually, the second class $\mathcal P_2$ contains all edge-labeling problems for which there exists a sequential algorithm that solves the problem---even when the input graph comes with a correct partial solution---by assigning an output label for each edge (presented to the algorithm in an adversarial order) by only taking the $1$-hop edge neighborhood (which includes all information associated with the nodes and edges incident and adjacent to the currently processed edge) into account.
Amongst others, this class contains the fundamental problems of maximal matching, $(2\Delta - 1)$-edge coloring, and $(\text{edge-degree + 1})$-edge coloring.

We remark that the precise definitions of the two problem classes are quite technical and contain more problems than those that are captured by the informal outline provided above.
Disregarding this formal imprecision for a little while longer, we can now phrase our main contribution informally in the form of two theorems.
The first covers the aforementioned node-labeling problems.

\begin{theorem}[Simplified version of~\Cref{thm:nodesveryformal}]\label{thm:nodesinformal}
    Let $\Pi$ be a problem from $\mathcal P_1$ and $f$ a monotonically non-decreasing, non-zero function such that $\Pi$ can be solved in $O(f(\Delta) + \log^* n)$ rounds.
    Then $\Pi$ can be solved in $O(f(g(n)) + \log^* n)$ rounds on trees, where $g$ is the function satisfying $g(n)^{f(g(n))} = n$.
\end{theorem}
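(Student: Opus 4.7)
The strategy is to apply $\mathcal A$ to a carefully-chosen bounded-degree subforest and then extend the resulting partial solution to the rest of the tree using the $1$-local SLOCAL property that defines membership in $\mathcal{P}_1$.

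Set $d = g(n)$, so that by the defining equation $d^{f(d)} = n$, equivalently $f(d) = \log n / \log d$. The goal is an algorithm running in $O(f(d) + \log^* n)$ rounds on any tree with $n$ nodes. First, I partition the nodes into the \emph{light} set $L = \{v : \deg(v) \le d\}$ and the \emph{heavy} set $H = V \setminus L$. A handshake argument on the tree (sum of degrees $< 2n$) gives $|H| < 2n/d$. The induced subforest $T[L]$ has maximum degree at most $d$, so I run $\mathcal A$ on $T[L]$ in $O(f(d) + \log^* n)$ rounds and obtain a correct labeling of every light node.

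Second, I extend this partial labeling to $H$. Here membership of $\Pi$ in $\mathcal{P}_1$ is essential: the problem admits a $1$-local SLOCAL rule for the list version, so once any correct labeling of $L$ is fixed, the remaining heavy nodes can be consistently labeled by inspecting only their $1$-hop neighborhoods, in any order. To realise this in LOCAL I would recurse on the induced instance on $H$ (of size $\le 2n/d$), with the already-chosen labels of the adjacent light nodes encoded as the list-input at each heavy node, then stitch the two solutions using the $1$-local SLOCAL rule. Because the list-variant of a $\mathcal{P}_1$-problem is again in $\mathcal{P}_1$, the recursion is well-defined, and each recursion level shrinks the number of nodes by a factor of $d$, so the recursion terminates after $\log_d n = f(d)$ levels.

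The hard part will be the round-complexity analysis. A naive recursion pays $O(f(d) + \log^* n)$ per level across $f(d)$ levels, which gives a quadratic blow-up and is far too slow. To obtain the clean $O(f(g(n)) + \log^* n)$ bound I expect to need either (i) adapting the threshold at each recursion step so that the $i$-th level uses $d_i = g(n_i)$, and then showing that $\sum_i f(d_i)$ telescopes to $O(f(d))$ because $n_i$ shrinks geometrically while $f(g(\cdot))$ grows only polylogarithmically, or (ii) running all layers in parallel on a single virtual graph whose effective maximum degree is at most $g(n)$ everywhere, so that $\mathcal A$ is invoked \emph{once} in $O(f(d)+\log^*n)$ rounds and the $1$-local SLOCAL rule is then applied in $O(1)$ additional rounds to merge the layered partial solutions. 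Making either variant go through while verifying that the list-variant of $\Pi$ is preserved across recursion levels (so that the $\mathcal P_1$-extension guarantee remains available) is where the technical core of the proof will lie.
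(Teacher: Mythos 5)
Your approach genuinely diverges from the paper's, and as written it does not close. The paper does \emph{not} use a light/heavy threshold partition followed by recursion. Instead it runs a rake-and-compress process (due to Chang--Pettie) with parameter $k=g(n)$ to split the tree into \emph{compressed} nodes $C$ and \emph{raked} nodes $R$, and proves two structural lemmas: the subgraph on edges whose lower endpoint is compressed has maximum degree $\le k$, and every connected component of $T[R]$ has diameter $O(\log_k n) = O(f(g(n)))$. Then $\mathcal A$ is invoked exactly once on the bounded-degree part, and the raked part is solved by having each small-diameter component gather itself to one node and brute-force the \emph{edge-list} variant $\Pi^\times$ there (whose solvability on every valid instance is the hypothesis encoding membership in $\mathcal P_1$). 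No recursion is needed.

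The gap in your proposal is precisely where you flag ``the hard part.'' A plain light/heavy split gives $|H| \lesssim 2n/d$, but the induced forest $T[H]$ can still have diameter $\Theta(n/d)$ (e.g., a caterpillar whose spine nodes each carry $d$ leaves), so you cannot finish it cheaply without further structure --- this is exactly what the diameter bound on the raked part buys the paper, and it is what your decomposition lacks. Your proposed fix (i), recursing with adaptive thresholds $d_i = g(n_i)$, does not telescope: $n$ shrinks only by a factor of about $d$ per level, so $n_i$ and hence $d_i = g(n_i)$ and $f(d_i)$ stay essentially constant across the roughly $f(d)$ levels, giving a $\Theta(f(d)^2)$ bound (and at minimum a $\Theta(f(d)\cdot\log^* n)$ bound, since each level pays $\log^* n$). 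Your fix (ii) is not specified enough to assess. There is also a separate issue: the recursion re-applies $\mathcal A$ to a \emph{list-constrained} residual instance, but the hypothesis only gives an $O(f(\Delta)+\log^* n)$ algorithm for $\Pi$ itself, not for its node-list variant; ``$\mathcal P_1$ is closed under taking list variants'' gives existence of the $1$-local extension rule, not an efficient distributed algorithm for the restricted instance. The paper sidesteps both issues by solving the residual part via gathering rather than via $\mathcal A$, which is why the diameter bound from rake-and-compress is the load-bearing ingredient you would need to recover.
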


The second theorem covers the aforementioned edge-labeling problems.
However, we in fact prove a stronger theorem for this class of problems that works\footnote{When considering graphs of arboricity $a$, we implicitly assume that $a$ is known to the nodes.} for any graph of arboricity at most $a$ (and in the case of trees, i.e., $a = 1$, provides the dual to \Cref{thm:nodesinformal}).
The \emph{arboricity $a$} of a graph $G = (V,E)$ is defined as the smallest number of forests with node set $V$ such that each edge in $E$ appears in precisely one of the forests. 

\begin{theorem}[Simplified version of~\Cref{thm:edgesveryformal}]\label{thm:edgesinformal}
    Let $\Pi$ be a problem from $\mathcal P_2$ and $f$ a monotonically non-decreasing, non-zero function such that $\Pi$ can be solved in $O(f(\Delta) + \log^* n)$ rounds.
    Then $\Pi$ can be solved in $O\big(a+\frac{f(g(n))}{1-\log_{g(n)}a}+\log^{*}n\big)$ rounds on graphs of arboricity at most $a\leq \frac{g(n)}{5}$, where $g$ is the function that satisfies $g(n)^{f(g(n))}=n$.
\end{theorem}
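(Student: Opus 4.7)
My plan is to combine an arboricity-based orientation of the input graph with repeated applications of the given truly-local algorithm $\mathcal{A}$, using the SLOCAL($1$)-style list-solvability built into $\mathcal{P}_2$ to glue together the partial solutions produced by successive applications. I would start with a preprocessing phase that, via a Barenboim--Elkin-style $H$-partition, computes an acyclic orientation of $G$ with maximum out-degree $O(a)$ together with the accompanying layer indices in $O(a+\log^{*}n)$ rounds. This accounts for the additive $a$ and $\log^{*}n$ terms in the claimed bound and, simultaneously, produces the one-time $\log^{*}n$-round symmetry-breaking that $\mathcal{A}$ will require throughout later phases.

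After preprocessing, I would run $k = \Theta\!\bigl(\log g(n)/\log(g(n)/a)\bigr) = \Theta\!\bigl(1/(1-\log_{g(n)} a)\bigr)$ phases of the main procedure. In phase $i$, I apply $\mathcal{A}$ for $O(f(g(n)))$ rounds to a carefully chosen subgraph of currently unresolved edges, treating the labels fixed in previous phases as a correct partial solution; because $\Pi\in\mathcal{P}_2$ supports the list variant, this extension is always legal. The structural invariant I would maintain across phases is that, thanks to the low-outdegree orientation together with the arboricity bound, each phase shrinks the unresolved subgraph's maximum effective degree by a factor of $(g(n)/a)^{\Omega(f(g(n)))}$; after $k$ phases this cumulative reduction surpasses the initial maximum degree, so no unresolved edges remain. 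Adding up the preprocessing and the $k$ phases yields the target runtime $O\!\bigl(a + k\cdot f(g(n)) + \log^{*}n\bigr) = O\!\bigl(a + f(g(n))/(1-\log_{g(n)} a) + \log^{*}n\bigr)$.

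The hardest step I anticipate is the per-phase degree-reduction claim: proving that a single $O(f(g(n)))$-round application of $\mathcal{A}$ on the unresolved subgraph shrinks its effective degree by the claimed factor of $(g(n)/a)^{\Omega(f(g(n)))}$. A naive peeling-based approach only achieves a factor of $g(n)/a$ per phase and would therefore demand $\Theta(\log n/\log(g(n)/a))$ phases, introducing an extra $f(g(n))$ factor compared with the target. Closing this gap requires a careful interplay between the low-outdegree orientation (which controls how edges ``stack up'' at a node), the global arboricity bound (which limits how many edges can concentrate in any radius-$f(g(n))$ ball), and the list-solvability built into $\mathcal{P}_2$ (which is what makes iterative extension of partial solutions legitimate), and I expect that precisely specifying the subgraph processed in each phase---so that $\mathcal{A}$ is always invoked in a regime where its truly-local guarantee applies---will be where the bulk of the technical work sits, along with verifying that the side condition $a\leq g(n)/5$ provides enough slack for the geometric reduction to go through.
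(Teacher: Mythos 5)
Your proposal diverges fundamentally from the paper's argument, and the divergence is precisely at the step you flag as the "hardest step": the claimed per-phase degree reduction by a factor of $(g(n)/a)^{\Omega(f(g(n)))}$. There is no mechanism that would produce this. Algorithm $\mathcal{A}$ is a solver for $\Pi$, not a degree reducer: running it on a bounded-degree piece of the graph labels the edges of that piece, and removing those edges drops each node's degree by at most the degree of that piece, i.e.\ by at most $O(g(n))$, not by the stated geometric factor. A node of degree, say, $n^{1/f(g(n))}$ would then still have nearly the same degree after one phase, and your phase count $\Theta(1/(1-\log_{g(n)}a))$ would be nowhere near enough to clear it. You acknowledge that naive peeling only gives a $g(n)/a$ factor per phase, but you have no argument at all for the stronger factor, and indeed no such argument exists: it would require $\mathcal{A}$ to have a partial-progress guarantee when run on graphs of degree far above $g(n)$ for only $O(f(g(n)))$ rounds, which the hypothesis of the theorem does not supply.

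The paper's proof sidesteps this entirely and never runs $\mathcal{A}$ more than once. It first computes a layered decomposition via a modified compress operation (mark a node when its degree is at most $k$ \emph{and} at most $b=2a$ of its neighbors have degree exceeding $k$); an arboricity-counting argument shows the number of unmarked nodes drops by a factor $k/(4a)$ per iteration, so the decomposition finishes in $O(\log_{k/a}n)$ rounds. Crucially, with $k=g(n)$ this round count already equals $O\bigl(f(g(n))/(1-\log_{g(n)}a)\bigr)$ by the defining relation $g(n)^{f(g(n))}=n$---the fraction $1/(1-\log_{g(n)}a)$ in the target bound comes from the decomposition, not from repeating $\mathcal{A}$. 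Edges are then split into a "typical" part $E_2$ (a subgraph of maximum degree $\leq k$, on which $\mathcal{A}$ is run once) and an "atypical" part $E_1$, where each node has at most $2a$ atypical edges to higher layers; these are partitioned into $O(a)$ edge-disjoint star forests (after a $3$-coloring), each solved sequentially in constant rounds by gathering each star, using the node-list solvability of $\Pi$ to glue solutions onto the already-labeled part. Your $H$-partition preprocessing is in the right spirit, but an acyclic orientation of out-degree $O(a)$ by itself does not yield the degree-$k$ typical subgraph or the atypical-edge structure the paper relies on, and without the decomposition-driven round count the arithmetic in your plan does not close.
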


We note that both theorems are constructive in the sense that given any algorithm with a runtime of $O(f(\Delta) + \log^* n)$ rounds, the proofs of the theorems provide a way to transform it into an algorithm with a runtime of $O(f(g(n)) + \log^* n)$ rounds, resp.\ $O\big(a+\frac{f(g(n))}{1-\log_{g(n)}a}+\log^{*}n\big)$ rounds. 
In the following we put the two theorems into context and discuss their ramifications.

\paragraph{Tightness of the obtained bounds on trees.}
How close to the true tight (deterministic) complexity on trees will the bound achieved by our transformation be?
To shed some light on this question, let us take a look at the current status of known lower bounds in the LOCAL model.
To the best of our knowledge, all state-of-the art (deterministic) lower bounds for locally checkable problems (given in the node-edge-checkability formalism described in~\Cref{sec:prelims}) are achievable via round elimination and already hold on regular balanced\footnote{We call a tree regular and balanced if every non-leaf node of the tree has the same degree and there is a ``root'' node that has the same distance to each leaf. For each fixed $\Delta$, a $\Delta$-regular balanced tree with $n$ nodes exists for infinitely many $n$; if it is desired that a $\Delta$-regular balanced tree exists for \emph{each} positive integer $n$, the definition can be naturally adapted by allowing the nodes in the ``non-leaf layer'' furthest from the root to have arbitrary degrees between $0$ and $\Delta$, which does not affect the obtained lower bounds asymptotically.} trees.
More precisely, when the considered problem admits an algorithm with a runtime of the form $O(f(\Delta) + \log^* n)$ for some function $f$, then (as already briefly discussed above) the lower bound is essentially achieved by first obtaining a lower bound on the truly local complexity of the problem (that already holds on regular balanced trees), and then lifting the bound to a lower bound as a function of $n$ via a mechanical process (see, e.g.,
\cite{balliu2022distributed}).
If the lower bound on the truly local complexity is, say, $\Omega(h(\Delta))$, then this mechanical process first produces a lower bound of $\Omega(\min\{ h(\Delta), \log_{\Delta} n\})$ that holds for any $\Delta$, and then turns this bound into a lower bound purely as a function in $n$ by setting $h(\Delta)$ and $\log_{\Delta} n$ to be equal.
Observe that setting these two expression to be equal corresponds precisely to solving $\Delta^{h(\Delta)} = n$ for $\Delta$, and the lower bound is then achieved by inserting the obtained expression for $\Delta$ (which is a function in $n$) into $h(\Delta)$.
In other words (by using $g(n)$ to denote the aforementioned function in $n$ that expresses $\Delta$), the lower bound is precisely $\Omega(h(g(n)))$ where $g$ is the function satisfying $g(n)^{h(g(n))} = n$ (and this lower bound still holds on regular balanced trees).

By additionally observing that the above argumentation only requires that the lower bound on the truly local complexity holds already if the input tree is balanced and regular (irrespective of whether it is achieved via round elimination or not), we obtain the following: if $\Theta(f(\Delta))$ is the tight truly local complexity of a considered problem, then the upper bounds yielded by \Cref{thm:nodesinformal,thm:edgesinformal} are asymptotically tight, assuming that the lower bound on the truly local complexity already holds if the input tree is balanced and regular and that the problem cannot be solved in $o(\log^* n)$ rounds.
As this assumption is satisfied for all superconstant lower bounds for locally checkable problems in the literature we are aware of, it is plausible to assume that it holds for the vast majority of such problems.
For all of these problems, our theorems therefore reduce the task of obtaining tight bounds on trees to the task of proving tight bounds for the truly local complexity, which, as discussed above is a task that might be considerably easier to approach and comes with a ready-made tool in the form of round elimination.
As such, our work makes substantial progress in answering the open question stated above.

\paragraph{Concrete implications.}
Moreover, even when a tight complexity might not be in sight yet, our transformation provides a tool to translate improvements on the best known bounds for the truly local complexity to improvements for the complexity (as a function of $n$) on trees and graphs of bounded arboricity.
Concretely, we will make use of the recent breakthrough on the complexity of the $(\text{edge-degree} + 1)$-edge coloring problem by~\cite{balliu2022edge}, which yields an upper bound of $O(\log^{12} n)$ rounds\footnote{Concretely, the exponent of $12$ follows from~\cite[Theorem D.4, arXiv version]{balliu2022edge}.} to obtain the state-of-the-art complexity on trees.
By observing that $(\text{edge-degree} + 1)$-edge coloring is a problem from $\mathcal P_2$, we obtain the following result by~\Cref{thm:edgesinformal}.\footnote{We note that, for technical reasons, the precise bound on $a$ given in~\cref{thm:edgedegplusoneintro} requires the formal version \Cref{thm:edgesveryformal} of~\Cref{thm:edgesinformal} while only a slightly weaker bound on $a$ can be obtained using~\Cref{thm:edgesinformal}.}

\begin{restatable}{theorem}{edgecolthm}\label{thm:edgedegplusoneintro}
    The complexity of $(\text{edge-degree} + 1)$-edge coloring is $O(\log^{12/13} n)$ rounds on trees and
    $O\big(a + \log^{12/13} n\big)$ rounds on graphs of arboricity at most $a \leq 2^{\log^{1/13}n}$.
\end{restatable}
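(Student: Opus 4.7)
The plan is to instantiate the transformation of \Cref{thm:edgesinformal} (resp.\ its formal strengthening \Cref{thm:edgesveryformal} for the arboricity part) using, as the input algorithm, the truly local algorithm for $(\edgedeg + 1)$-edge coloring of~\cite{balliu2022edge}, which runs in $O(\log^{12}\Delta + \log^* n)$ rounds. The proof therefore reduces to three small tasks: verifying that the problem lies in $\mathcal P_2$, computing the associated function $g$ from $f(\Delta) = \log^{12}\Delta$, and simplifying the resulting bounds.

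For membership in $\mathcal P_2$, I would exhibit the obvious greedy witness: given an adversarial order on the edges and any correct partial coloring of a subset of them, assign to each newly presented edge $e$ an unused color from $\{1, \dots, \edgedeg(e) + 1\}$. Such a color exists because at most $\edgedeg(e)$ edges are incident to $e$, and the choice depends only on the $1$-hop edge neighborhood of $e$. The same witness handles the (suitably defined) list variant that is built into the class $\mathcal P_2$. This is the only step that requires consulting the formal definition in \Cref{sec:prelims}, and I expect it to be the most technical part of the proof, though only mildly so since the greedy argument is completely standard.

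Plugging $f(\Delta) = \log^{12}\Delta$ into the defining relation $g(n)^{f(g(n))} = n$ and taking logarithms yields $\log^{13} g(n) = \log n$, hence $g(n) = 2^{\log^{1/13} n}$ and $f(g(n)) = (\log^{1/13} n)^{12} = \log^{12/13} n$; a direct check confirms $g(n)^{f(g(n))} = 2^{\log^{1/13} n\cdot \log^{12/13}n} = n$. \Cref{thm:edgesinformal} then delivers the tree bound $O(f(g(n)) + \log^* n) = O(\log^{12/13} n)$, as $\log^{12/13} n$ dominates $\log^* n$. For the arboricity bound, the informal theorem would already give $O(a + \log^{12/13} n)$ for $a$ up to some slightly weaker threshold than $2^{\log^{1/13} n}$, because the factor $1/(1 - \log_{g(n)} a)$ in its runtime forces $a$ to be bounded away from $g(n)$ by a constant factor in the exponent; invoking the formal \Cref{thm:edgesveryformal} removes this slack and, after the same arithmetic substitution, yields the claimed $O(a + \log^{12/13} n)$ bound over the full range $a \le 2^{\log^{1/13} n}$.
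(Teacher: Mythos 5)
Your plan matches the paper's proof: same input algorithm, same computation of $g(n) = 2^{\log^{1/13}n}$ and $f(g(n)) = \log^{12/13} n$, and the same reliance on the formal \Cref{thm:edgesveryformal} (instantiated with $\rho = 2$ so that the factor $\rho/(\rho - \log_{g(n)}a)$ stays $O(1)$ for all $a \le g(n)$) to reach the full arboricity range $a \le 2^{\log^{1/13}n}$. The one step you describe as a routine greedy witness is where the paper's real technical content lies, and I would caution you that it is more than mild bookkeeping: the node-edge-checkability formalism has no way to directly express ``color $\le \edgedeg(e) + 1$'', because no single node constraint or edge constraint sees both endpoint degrees. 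The paper therefore encodes the problem using composite labels $(a,b)$ --- a ``degree part'' $a$ and a ``color part'' $b$ --- with the edge constraint $a_1 + a_2 \ge b + 1$ and the node constraint that every degree part is at most the number of non-$D$ incident half-edges; these jointly enforce $b \le \deg(u) + \deg(v) - 1 = \edgedeg(e) + 1$. Your sequential greedy completion must then be carried out inside this encoding (this is precisely Lemma~\ref{lem:solveedge}): when labeling a new edge you must choose not only an unused color but also a degree part that keeps the node constraint satisfiable as later half-edges get labeled, which requires tracking a running count of labeled half-edges per node. Once you accept that the formalization step is a genuine (if not deep) ingredient rather than a one-line observation, the rest of your proposal goes through exactly as written.
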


As $(2 \Delta - 1)$-edge coloring is a problem that is at most as hard as $(\text{edge-degree} + 1)$-edge coloring, we obtain the upper bounds of~\Cref{thm:edgedegplusoneintro} also for $(2 \Delta - 1)$-edge coloring.
Previously, for $(2 \Delta - 1)$-edge coloring, the best upper bound  known on trees was $O(\log n/\log \log n)$ and the best upper bound knows for graphs of arboricity $a$ was $O(\log n/\log \log n)$ if $a \leq \log^{1-\varepsilon} n$ for some constant $\varepsilon > 0$ and $O(a + \log n)$ for any $a$ \cite{barenboim2013distributed}.
The approach from~\cite{barenboim2013distributed} can also be applied to achieve the same bounds for $(\text{edge-degree} + 1)$-edge coloring.
\Cref{thm:edgedegplusoneintro} provides the first strongly sublogarithmic-time\footnote{A function (in $n$) is called strongly sublogarithmic if it is in $O(\log^{\beta} n)$ for some constant $\beta < 1$.} algorithm for one of the fundamental symmetry-breaking problems on trees.
In particular, it breaks through the $\Omega(\log n/\log \log n)$-barrier that constitutes a lower bound for other symmetry-breaking problems such as MIS or maximal matching on trees~\cite{balliu2021lower,balliu2022distributed}, and thereby provides a separation between the aforementioned edge coloring problems and MIS/maximal matching on trees. 

The arboricity result of \Cref{thm:edgedegplusoneintro} implies that $(\text{edge-degree} + 1)$-edge coloring and $(2\Delta - 1)$-edge coloring can be solved in strongly sublogarithmic time on all graphs with strongly sublogarithmic arboricity.
In particular, we obtain the upper bound of $O(\log^{12/13} n)$ rounds on all graphs with constant arboricity such as, e.g., planar graphs.

Our results can also be used to reprove the state-of-the-art upper bounds on trees for MIS and maximal matching~\cite{barenboim2010sublogarithmic,barenboim2013distributed} in a generic manner.
For the last remaining of the ``Big Four'' symmetry-breaking problems, $(\Delta + 1)$-coloring, recall that the truly local complexity is still wide open, with no superconstant lower bound known and the best known upper bound being $O(\sqrt{\Delta \log \Delta})$ rounds~\cite{maus2020local}.
On trees, the best bounds known are $O(\log n/\log \log n)$ rounds due to~\cite{barenboim2010sublogarithmic} and $\Omega(\log^* n)$ rounds due to~\cite{linial1992locality}.
We observe that, by \Cref{thm:nodesinformal}, any large enough improvement of the known upper bound for the truly local complexity of $(\Delta + 1)$-coloring would yield an improvement for the complexity of $(\Delta + 1)$-coloring on trees; e.g., improvements to $O(2^{\sqrt{\log \Delta}})$ or $O(\log^{5} \Delta)$ would yield upper bounds on trees of $O(\log n/\log^2 \log n)$ or $O(\log^{5/6} n)$, respectively.
This holds similarly for other problems such as $(\deg + 1)$-coloring or the aforementioned edge coloring problems; e.g., improving the exponent in the best known upper bound $O(\log^{12} \Delta)$ for the truly local complexity of $(2\Delta - 1)$-edge coloring to $1$ would yield an upper bound of $O(\sqrt{\log n})$ rounds for $(2\Delta - 1)$-edge coloring on trees.

\paragraph{Towards tight bounds on general graphs?}
A conjecture that is consistent with the current state of the art is that for all natural locally checkable problems the tight complexity \emph{on general graphs} is the same as on trees.
In fact, for many problems, the best known upper bounds have steadily improved over the last years, inching closer to the lower bounds that already hold on trees.
Assuming this plausible conjecture, it stands to reason that for the ultimate goal of obtaining tight bounds on general graphs, we need a fine-grained approach that differentiates between all problems that have different truly local complexity---such as, e.g., maximal matching and $(2\Delta - 1)$-edge coloring.
We see our work also as a step towards this goal by providing such a differentiation. 

\subsection{Our Approach in a Nutshell}
After establishing the essential idea of reducing the task of designing algorithms (and proving upper bounds) on trees and graphs of arboricity $a$ to the task of designing algorithms achieving a good truly local complexity, our approach for obtaining such a reduction is conceptually simple.
At the heart of our approach lies the idea of decomposing the input graph into two parts with different properties: The first part is simply required to have a maximum degree of $k$, for some suitably chosen parameter $k$.
In the case of~\Cref{thm:nodesinformal}, each connected component of the second part is required to have diameter $O(\log_k n)$, while in the case of~\Cref{thm:edgesinformal}, the second part is required to admit a decomposition into $O(a)$ parts for each of which each connected component must have constant diameter.
Our algorithm for solving a given problem that admits an algorithm with a runtime of $O(f(\Delta) + \log^* n)$ for some function $f$ then proceeds as follows: compute a decomposition as described above in $O(\log_{\frac{k}{a}} n + \log^* n)$ rounds, then solve the first part in $O(f(k) + \log^* n)$ rounds using the aforementioned algorithm, and finally, in the case of~\Cref{thm:nodesinformal}, solve the second part in $O(\log_k n)$ rounds, and, in the case of~\Cref{thm:edgesinformal}, solve each of the $O(a)$ parts of the second part one after the other in a constant number of rounds by simply gathering the whole part in one node, computing a solution for the whole part in that node, and then distributing the solution to the other nodes in that part.
By setting $k := g(n)$ (where $g$ is the function from \Cref{thm:nodesinformal,thm:edgesinformal}), the described approach computes the respectively used decomposition in $O(\log_{\frac{g(n)}{a}} n + \log^* n) \subseteq O(\frac{f(g(n))}{1 - \log_{g(n)} a} + \log^* n)$ rounds\footnote{For the respective calculation, see the proof of~\Cref{thm:edgesveryformal}.}, solves the first part in $O(f(g(n)) + \log^* n)$ rounds, and solves the second part in $O(\log_{g(n)} n) = O(f(g(n)))$ or $O(a)$ rounds, yielding an overall runtime of $O(a + \frac{f(g(n))}{1 - \log_{g(n)} a} + \log^* n)$ rounds, which simplifies to $O(f(g(n)) + \log^* n)$ in the case of trees (for which $a = 1$ holds).

We will use different decompositions for our approach, depending on whether we prove \Cref{thm:nodesinformal} or \Cref{thm:edgesinformal}. 
For \Cref{thm:nodesinformal}, our decomposition heavily relies on a $O(\log_k n)$-round rake-and-compress~\cite{miller1985parallel} process (parameterized by some parameter $p$) introduced by~\cite{chang2019distributed} that guarantees that when a node $v$ is removed during the process, the degree of $v$ is $1$ or the degree of every node in $v$'s $1$-hop neighborhood is at most $p$.
As we show, when executed with parameter $p = k = g(n)$, the rake-and-compress decomposition obtained from this process can be transformed into a decomposition into two parts as described above in a simple way.

For the more involved \Cref{thm:edgesinformal}, our decomposition is based on a new process (parameterized by two parameters $b, p$) that crucially differs from the process by~\cite{chang2019distributed} in that nodes of degree at most $p$ may already be removed when they still have neighbors of degree greater than $p$, provided there are not more than $b$ of such neighbors (which also allows us to omit ``rake steps'', addressing nodes of degree $1$, entirely).
As we show, when executed with parameters $p = k = g(n)$ and $b$ chosen suitably from $\Theta(a)$, the decomposition obtained by this process can transformed into a decomposition as described above, via a careful distinction between different kinds of edges (one of which captures all edges between a node $v$ and its neighbors of degree $>p$ at the time at which $v$ was removed).

We note that the two decompositions also differ in how the aforementioned ``parts'' of the decomposition are related: in the decomposition for \Cref{thm:nodesinformal}, the parts are subgraphs that are \emph{node-disjoint} whereas in the decomposition for \Cref{thm:edgesinformal}, they are \emph{edge-disjoint} subgraphs, which, informally speaking reflects the types of problems the respective theorems are applicable to.

We believe that the clean structure of our approach will enable future work to build on it to achieve a similar transformation as captured by \Cref{thm:nodesinformal,thm:edgesinformal} for even wider classes of problems.
In particular, obtaining further decompositions similar to the ones we use seems to be a highly promising approach to this end.

\paragraph{Roadmap.}
In~\Cref{sec:prelims}, we provide some preliminaries, such as graph-theoretic terminology and a formal introduction to the model of computation and the node-edge-checkability formalism.
We will then proceed to prove (the formal version of) \Cref{thm:nodesinformal} in \Cref{sec:provenodes} and (the formal version of) \Cref{thm:edgesinformal} in \Cref{sec:proveedges}, and conclude in \Cref{sec:actualproblems} by proving \Cref{thm:edgedegplusoneintro} and showing how to use (the formal version of) \Cref{thm:edgesinformal} to derive the known upper bound for maximal matching on trees.

\section{Preliminaries}\label{sec:prelims}
In this section, we introduce notation and terminology, in particular graph-theoretic notions, the model of computation, and the classes of problems we will be considering.

Let $G=(V,E)$ be a simple graph. The maximum degree of $G$ is denoted by $\Delta(G)$ (or simply $\Delta$ when $G$ is clear from the context).
For a node $u\in V(G)$ and an edge $e\in E(G)$, denote by $\deg_{G}(u)$ and $\edgedeg_{G}(e)$ (or simply $\deg(u)$ and $\edgedeg(e)$) the number of adjacent nodes of $u$ and the number of adjacent edges of $e$, respectively.
For a subset $P\subseteq V(G)$ and a subset $Q\subseteq E(G)$, denote by $G[P]$ and $G[Q]$ the graphs $(V_P=P,E_P=\{\{u,v\}\in E(G)\mid u,v\in P\})$ and $(V_Q=\{u\in V(G)\mid\exists \  v\ \text{such that }\{u,v\}\in Q\},E_Q=Q)$ respectively.
We may refer to $G[P]$ and $G[Q]$ as the graphs \emph{induced by $P$} and \emph{induced by $Q$}, respectively. 

We use the notion of half-edges which has been used widely in the context of locally checkable problems. For a graph $G=(V,E)$, a pair $(v,e)\in V \times E $ is called a \textit{half-edge} if $e$ is incident on $v$.
We will denote the set of half-edges of $G$ with $H(G)$. A \textit{half-edge labeling} of a graph is a mapping from $H(G)$ to a set of \emph{labels}.

In order to obtain our results in their full generality, we will need to work with objects that are slight extensions of graphs in that they also allows edges with only one or zero endpoints.
For technical reasons, we will define these objects as bipartite graphs, where one side of the bipartition represents the nodes of the object, the other side the edges (with potentially $0$ or $1$ endpoints) of the object and the edges between the two parts the incidence relation between the nodes and edges of the object (i.e., the half-edges of the object).

\begin{definition}[Semi-graph]
    A semi-graph is a bipartite graph $S = (V_S, E_S) = (A\bigsqcup B, C)$ such that for any node $b\in B$, $\deg_{G}(b)\leq 2$.
\end{definition}

Note that also any standard graph $G=(V,E)$ can be understood as a semi-graph $S=(V,E, \{\{v,e\}\mid (v,e)\in H(G)\})$.
We may also use $G$ to refer to this semi-graph when convenient. 

For a semi-graph $S=(A,B,C)$, we will refer to the elements of $A$ as \emph{nodes}, to the elements 
of $B$ as \emph{edges}, and to the elements of $C$ as \emph{half-edges}.
We will denote these as $\Vs(S)$, $\Es(S)$ and $H(S)$, respectively.
For $u\in \Vs(S)$ and $e\in \Es(S)$, we refer to their degree in $S$ by \emph{degree of $u$} and \emph{rank of $e$}, denoted with $\deg{u}$ and $\rank{e}$, respectively.
(Note that the definition of $\deg(\cdot)$ corresponds to the one for graphs given further above in case $S$ is a graph.)
For $h=\{v,e\}\in H(S)$ with 
$v\in\Vs(S)$ and $e\in\Es(S)$, we will say that $h$ is \emph{incident on $u$} (and vice versa), $h$ is \emph{incident on $e$} (and vice versa), and $e$ is \emph{incident on $v$} (and vice versa).

For a semi-graph $S$, define as the \emph{underlying graph of $S$} the graph $G$ where $V(G)=\Vs(S)$, $E(G)=\{\{v_1,v_2\} \mid v_1,v_2\in \Vs(S) \text{ and }\{v_1,e\},\{v_2,e\}\in H(S) \text{ for some } e\in\Es(S)\}$. Denote the degree of the underlying graph of $S$ as the \emph{underlying degree of $S$}. We call a semi-graph $S$ connected if its underlying graph is connected. 

We will abuse terminology and say that \emph{graph $G$ is a subgraph of semi-graph $S$} if $G$ is a subgraph of the underlying graph of $S$. Similarly we will say that \emph{semi-graph $S$ is a subgraph of graph $G$} if $S$ is a subgraph of semi-graph $G$.
In general, a semi-graph should be simply considered as a graph with some additional edges with $0$ or $1$ endpoints, and we may use standard graph-theoretic notions for semi-graphs in the natural way.

Next, we define our model of computation formally.

\begin{definition}[LOCAL model~\cite{linial1992locality,peleg2000}] The LOCAL model of computation is modeled by an undirected graph $G=(V,E)$ in which each of the nodes represents a computational unit with unbounded computational power and memory. Each node knows the number of nodes $n$ and the maximum degree $\Delta$ of the graph. Each of the nodes has a globally unique identifier from $\{1,2\dots, n^c\}$ for some large enough constant $c$. Depending on the problem to solve, there may be other inputs on nodes and edges. A $T$-round LOCAL algorithm runs for $T$ synchronous rounds. In each round, each node first sends (potentially different) messages of arbitrary size to its neighbors, then receives the information sent by its neighbors, and finally may perform some (arbitrarily complex) computation. At the end of T rounds, each node determines its output and terminates. Equivalently, each node learns all the information in its it $T$-hop neighborhood in $T$ rounds (since the message size is unlimited) and decides on an output.
The definition of the LOCAL model extends to semi-graphs in the natural way, where the computational units are the nodes $\Vs(S)$ and messages can only be sent via edges of rank $2$.
\end{definition}

\begin{definition}[Node-edge-checkable problem] A \emph{node-edge-checkable problem} $\Pi$ is a triple $(\Sigma,\mcN_{\Pi},\mcE_{\Pi})$ where $\Sigma$ is a (possibly infinite) set of output labels, $\mcN_{\Pi}=\{\mcN_{\Pi}^{0},\mcN_{\Pi}^{1}, \dots\}$ where $\mcN_{\Pi}^{i}$ is a collection of cardinality-$i$ multi-sets $\{X_1,X_2,\dots, X_i\}$ with $X_1,X_2,\dots, X_i \in \Sigma$, and $\mcE_{\Pi}=\{\mcE_{\Pi}^0,\mcE_{\Pi}^1,\mcE_{\Pi}^2\}$ where $\mcE_{\Pi}^i$ is a collection of cardinality-$i$ multi-sets $\{Y_1,\dots, Y_i\}$ with $Y_1,\dots, Y_i\in\Sigma$.

A valid solution to $\Pi$ on an input semi-graph $S$ is a function $\hout:H(S) \to \Sigma$ such that
\begin{itemize}
    \item for every node $v\in \Vs(S)$, the multiset of labels assigned by $\hout$ to its incident half-edges is in $\mcN_{\Pi}^{\ndeg(v)}$,
    \item for every edge $e\in \Es(S)$, the multiset of labels assigned by $\hout$ to its incident half-edges is in $\mcE_{\Pi}^{\rank(e)}$.
\end{itemize}
\end{definition}

We proceed to define a couple of ``list'' variants of a node-edge-checkable problem $\Pi$ that we will need for stating our theorems.
We will first state them formally and then give some intuition how to understand them.

\begin{definition}[\NodelistP]\label{def:nodelistvariant} For a node-edge-checkable problem $\Pi=(\Sigma,\mcN_{\Pi},\mcE_\Pi)$, define its \emph{node-list variant} $\Pi^*$ as the triple $(\Sigma,\mcL_{\Pi},\mcE_{\Pi})$ where

$\mcL_{\Pi}=\{\mcL_{\Pi}^0, \mcL_{\Pi}^1,\dots\}$ where $\mcL_{\Pi}^i$ is defined as
\begin{itemize}
    \item $\mcL_{\Pi}^i=\{\mcN_{\Pi, \mset}^i \mid \mset\in \mcN_{\Pi}^{j}\text{ for some }j\in\mathbb{Z}\} $ where
    \item for each $j$ and each $\mset=\{\psi_1,\dots,\psi_j\}\in\mcN_{\Pi}^j$, we define $\mcN_{\Pi,\mset}^i=  \{\{\chi_1,\dots ,\chi_i\} \mid\chi_1,\dots,\chi_i\in\Sigma \text{ and }\{\chi_1,\dots ,\chi_i,\mset_1,\dots,\mset_j\} \in \mcN_{\Pi}^{i+j}\}$.
\end{itemize}

An input instance of $\Pi^*$ is a pair $(S,\hin)$ such that
\begin{itemize}
    \item $S$ is a semi-graph, and
    \item $\hin$ is a function $\hin:\Vs(S)\to \bigcup_{j\in\mathbb{Z}_{\geq 0}}\mcL_{\Pi}^j$ such that $\hin(u)\in \mcL_{\Pi}^{\ndeg(u)}$.
\end{itemize} 
A valid solution to $\Pi^*$ on an input instance $(S,\hin)$ is a function $\hout:H(S) \to \Sigma$ such that
\begin{itemize}
    \item for every node $v\in \Vs(S)$, the multiset of labels assigned by $\hout$ to its incident half-edges is in $\hin(u)$, and
    \item for every edge $e\in \Es(S)$, the multiset of labels assigned by $\hout$ to its incident half-edges is in $\mcE_{\Pi}^{\rank(e)}$.
    
\end{itemize}  
\end{definition}

\begin{definition}[\EdgelistP]\label{def:edgelistvariant} For a a node-edge checkable problem $\Pi=(\Sigma,\mcN_{\Pi},\mcE_\Pi)$, define its \emph{edge-list variant} $\Pi^{\times}$ as the triple $(\Sigma,\mcN_{\Pi},\mcL_{\Pi})$ where

$\mcL_{\Pi}=\{\mcL_{\Pi}^0,\mcL_{\Pi}^1,\mcL_{\Pi}^2\}$ where $\mcL_{\Pi}^i$ is defined as
\begin{itemize}
    \item $\mcL_{\Pi}^i=\{\mcE_{\Pi,\mset}^i \mid \mset\in \mcE_{\Pi}^{j}\text{ for some }0\leq j\leq 2-i\} $ where
    \item for each $j$ and each $\mset=\{\mset_1,\dots,\mset_j\}\in\mcE_{\Pi}^j$, we define $\mcE_{\Pi,\mset}^i=  \{\{\chi_1, \dots ,\chi_{i}\} \mid \chi_1,\dots ,\chi_{i}\in\Sigma \text{ and }\{\chi_1,\dots ,\chi_i,\mset_1,\dots,\mset_j \in \mcE_{\Pi}^{i+j}\}.$
\end{itemize}

An input instance of $\Pi^{\times}$ is a pair $(S,\hin)$ such that
\begin{itemize}
    \item $S$ is a semi-graph, and
    \item $\hin$ is a function $\hin:\Es(S)\to \bigcup_{j\in \{0,1,2\}}\mcL_{\Pi}^j$ such that $\hin(e)\in \mcL_{\Pi}^{\rank(e)}$.
\end{itemize} 
A valid solution to $\Pi^{\times}$ on an input instance $(S,\hin)$ is a function $\hout:H(S) \to \Sigma$ such that
\begin{itemize}
    \item for every node $v\in \Vs(S)$, the multiset of labels assigned by $\hout$ to its incident half-edges is in $\mcN_{\Pi}^{\ndeg(v)}$, and
    \item for every edge $e\in E_{semi}(S)$, the multiset of labels assigned by $\hout$ to its incident half-edges is in $\hin(e)$.

\end{itemize}  
\end{definition}

Intuitively, the problems defined in \Cref{def:nodelistvariant,def:edgelistvariant} should be understood as problems arising on a subsemi-graph of the input semi-graph $S$ when the actual problem $\Pi$ that is supposed to be solved has already been solved on the part of $S$ not belonging to the subsemi-graph.
Imagine an output for $\Pi$ has been fixed on some half-edges such that
\begin{itemize}
    \item for each edge (of any rank), either all incident half-edges received an output label or none of them did, and
    \item for each node and each edge of $S$ for which the output has been fixed on all incident half-edges, the output is correct (i.e., is contained in the respective $\mcN_{\Pi}^i$ or $\mcE_{\Pi}^i$).
\end{itemize}
Consider a node $v \in \Vs(S)$.
Some of the half-edges incident to $v$ might have received their output while others might not have.
If we wanted to characterize which possibilities are left for the labels on those incident half-edges that have not received their output, we might do so by assigning an input label to $v$ that lists all of these possibilities.
This input label would essentially be a new collection of allowed label configurations that replaces the node constraint $\mcN_{\Pi}^i$ corresponding to $v$'s degree.

Now Problem $\Pi^*$ can be seen as the problem that captures the task of solving all instances where each node is given such a new collection of label configurations that might come from the outlined scenario that the instance is actually part of a larger instance that has been partially solved.
From this perspective, $i$ in~\Cref{def:nodelistvariant} represents the ``remaining unsolved'' degree of a considered node, $j$ the ``solved'' degree, $i + j$ the original degree (in the larger instance), $\psi$ a possible output on the ``solved'' incident half-edges, $\mcL_{\Pi}^i$ the set of all possible replacement collections of configurations, each of which comes from a possible way how the solved incident half-edges are fixed (specified by $\psi$), $\hin$ the function assigning to each node its replacement collection, and so on. 

The intuition for~\Cref{def:edgelistvariant} is similar, with the essential difference being that the scenario to be imagined comes with the requirement that for each \emph{node} (instead of for each \emph{edge}), either all incident half-edges received an output label or none of them did.

For examples of node-list and edge-list variants of concrete problems, we refer to~\Cref{sec:actualproblems}.

\section{Proving \Cref{thm:nodesinformal}}\label{sec:provenodes}
A key technique used in this section is the rake-and-compress technique on trees introduced in \cite{miller1985parallel}. There are a number of variants of this method (see, e.g., \cite{chang2019distributed,lievonen2024distributed}), one of which~\cite{chang2019distributed} we will rely on heavily in this section and discuss below. The process takes as input a graph which is a tree and a parameter $k\geq2$ and outputs a partition of the nodes into layers. The process proceeds by iterative application of rake and compress operations which are defined as follows:
\begin{itemize}
    \item \textit{Rake operation}: Rake($G$) marks a node $u\in V(G)$ if the degree of $u$ is at most 1.
    \item \textit{Compress operation}: Compress($G,k$) marks a node $u\in V(G)$ if the degrees of $u$ and all of its neighbors are at most $k$.
\end{itemize}
  We say a node is \textit{raked} if it is marked by a rake operation and a node is \textit{compressed} if it is marked by a compress operation. 
  We may refer to them as raked nodes and compressed nodes respectively. We now formally define the rake-and-compress algorithm on trees from~\cite{chang2019distributed}.
\begin{algorithm}[H]
        \caption{Rake-and-Compress Algorithm}\label{rake-and-compress}
        \begin{algorithmic}
           
            \State Let $G=(V,E)$ be a tree. Set all nodes of $G$ to be unmarked initially.  
            \State Set $V_0=V(G)$. For $i\in\{1,2\dots, \ceil{\log_{k}n+1}$, do the following:
                    \begin{enumerate}
                    \item Set $C_i=\emptyset$, $R_i=\emptyset$. 
                    \item Perform Compress($G[V_{i-1}],k$). Add the nodes marked in this compress step to $C_i$. 
                    \item Perform  Rake($G[V_{i-1}\setminus C_i]$). Add the nodes marked in this rake step to $R_i$. 
                    \item Set $V_i$ to be the set of unmarked nodes in $V_{i-1}$.
                    \end{enumerate} 
                    
        \end{algorithmic}
    \end{algorithm}

Theorem 9 in \cite{chang2019distributed} guarantees that \cref{rake-and-compress} is sufficient to mark all the nodes.
\begin{lemma}[\cite{chang2019distributed}]
All nodes in the input tree are either raked or compressed by \cref{rake-and-compress}.
\end{lemma}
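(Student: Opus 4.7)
The plan is to show that after each iteration $i$, the surviving set $V_i$ shrinks by a factor of (roughly) $k$, so that after $\lceil \log_k n + 1 \rceil$ iterations we must have $V_i = \emptyset$. Concretely, I would aim to prove the invariant $|V_i| \leq n/k^{i-1}$ (or, with a slightly sharper analysis, $|V_i| \leq n/k^i$), and then observe that plugging in $i = \lceil \log_k n + 1 \rceil$ forces $|V_i| < 1$, which is the desired conclusion.

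To set this up, I would first extract the structural properties of a surviving node. Any $v \in V_i$ escaped both Compress and Rake in iteration $i$. Escaping Rake on $G[V_{i-1} \setminus C_i]$ means $v$ has at least two neighbors in $G[V_{i-1}]$ that were not compressed in this iteration. Escaping Compress on $G[V_{i-1}]$ means that either $\deg_{G[V_{i-1}]}(v) > k$ or $v$ has a neighbor $u$ in $G[V_{i-1}]$ with $\deg_{G[V_{i-1}]}(u) > k$. Together, around every surviving $v$ there is a "witness" of high degree contributing at least $k+1$ incident edges in $G[V_{i-1}]$.

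With this in hand, the main step is a charging argument that associates $k$ (or more) distinct nodes in $V_{i-1} \setminus V_i$ to each $v \in V_i$. The cleanest way to enforce disjointness is to exploit the tree structure: I would root each connected component of $G[V_{i-1}]$ and, for each $v \in V_i$, identify a specific "high-degree witness" $u$ lying in a canonical subtree-branch at $v$ (guaranteed to exist because $v$ has two branches after raking, at least one of which contains such a witness). The $k$ extra neighbors of $u$ beyond the edge to $v$ can then be charged uniquely to $v$. Iterating the resulting bound $|V_{i-1}| \geq k \cdot |V_i|$ from $V_0$ downward yields the claimed geometric decrease and hence $V_{\lceil \log_k n + 1 \rceil} = \emptyset$, i.e., every node was raked or compressed.

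The main obstacle will be the disjointness in the charging: naively, two surviving nodes $v, v'$ could share the same high-degree witness $u$, or their charged neighbors of $u$ could overlap with each other or with $v, v'$ themselves. This is where the tree assumption is crucial — cycles would allow double-counting — and where I expect the bulk of the technical work to lie, likely via a careful rooting together with a case analysis separating the "$v$ itself has high degree" case from the "a neighbor of $v$ has high degree" case. If making the charging fully injective proves too delicate, a fallback is to prove the weaker bound $|V_{i-1}| \geq (k/2)|V_i|$ (absorbing a factor of $2$ for double-charging) and then absorb the constant into the $\lceil \log_k n + 1 \rceil$ bound by adjusting the additive constant, which still suffices for the statement as written.
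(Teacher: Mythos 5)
The paper does not prove this lemma; it is imported verbatim from \cite{chang2019distributive} (Theorem~9 of that work), so there is no internal proof in this paper to compare against. Your attempt is therefore a from-scratch reconstruction, and as such it has a genuine gap exactly where you suspect it.

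The core problem is that the per-$v$ charging is not injective, and the fix you hint at (pick the witness in a ``canonical'' branch) does not make it so. A single high-degree node $u\in V_{i-1}$ with $\deg_{G[V_{i-1}]}(u)>k$ can simultaneously serve as the reason for survival of \emph{many} nodes: $u$ itself (if it is not raked) and every low-degree neighbor of $u$ that retains a second neighbor after the compress step. All of these surviving nodes would want to charge the \emph{same} $\approx k$ removed neighbors of $u$, and rooting the tree does not untangle this, since $u$ can lie in the chosen subtree-branch of several distinct ancestors. Consequently the intended inequality $|V_{i-1}|\ge k\,|V_i|$ does not follow from your argument, and the fallback $|V_{i-1}|\ge (k/2)|V_i|$ suffers from the same issue (and would in any case only give $\log_{k/2} n$ iterations, which is not within the $\lceil \log_k n+1\rceil$ budget of Algorithm~\ref{rake-and-compress} once $k$ is not huge, so the ``absorb the constant'' step would require changing the algorithm, not just the proof). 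A working argument has to be global rather than per-node: first bound the set $H$ of degree-$>k$ nodes by $|H| < 2|V_{i-1}|/(k+1)$ via a degree-sum count in the forest $G[V_{i-1}]$, and then show $|V_i| = O(|H|)$ by exploiting that every uncompressed node lies within distance $1$ of $H$ and that $G[V_{i-1}]$ is acyclic --- e.g., by contracting the connected components of low-degree uncompressed nodes and counting edges in the resulting tree. That style of counting sidesteps the need to disjointly assign removed nodes to individual survivors, which is where your proposal breaks down.
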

We will refer to the sets $C_i$ and $R_i$ as \textit{compress layers} and  \textit{rake layers} respectively. We will use simply the term \textit{layer} to refer to any of the compress and rake layers. We fix a total ordering on the layers in the following way: for two distinct layers $L_1$ and $L_2$, we say that \emph{$L_1$ is higher than $L_2$} if nodes in $L_1$ were marked after nodes in $L_2$. We may equivalently say \emph{$L_2$ is lower than $L_1$}. Similarly, we fix a total ordering on the nodes in the following way: \emph{node $u$ is lower than node $v$}, or equivalently \emph{node $v$ is higher than node $u$} if node $u$ is in a lower layer than node $v$. If both nodes are in the same layer, the one with higher ID is considered to be the higher node. If $u$ and $v$ are neighbors with $u$ higher than $v$, call $u$ a higher neighbor of $v$, and $v$ a lower neighbor of $u$. For an edge $e=\{u,v\}$, call $v$ the higher endpoint of $e$ and $u$ the lower endpoint of $e$ if $u$ is lower than $v$.

We now claim that the subtree of the input tree induced by the edges that have their lower endpoint in a compress layer has maximum degree bounded by $k$.
    \begin{lemma} \label{lemma-degree-bound}
    The maximum degree of the graph induced by the edges that have their lower endpoint in one of the layers $C_1,C_2,\dots, C_{\ceil{\log_k n+1}}$ is at most $k$.
   \end{lemma}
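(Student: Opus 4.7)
The plan is to attach to each node $w$ that appears in the graph $H$ from the statement a single iteration $s$ of \Cref{rake-and-compress} such that the compression event in iteration $s$ directly certifies $\deg_{G[V_{s-1}]}(w)\le k$, and then to argue that every $H$-neighbor of $w$ is still unmarked at the start of iteration $s$, i.e., belongs to $V_{s-1}$. Once both facts are in place, every $H$-edge incident to $w$ lies inside $G[V_{s-1}]$, so $\deg_H(w)\le \deg_{G[V_{s-1}]}(w)\le k$, as desired.

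For the choice of $s$, I would take the smallest iteration index such that some $H$-edge incident to $w$ has its lower endpoint in the compress layer $C_s$. Such an $s$ exists (and is at most $\ceil{\log_k n+1}$) because, by definition of $H$, every $H$-edge has its lower endpoint in some compress layer, and $w$ has at least one incident $H$-edge. I would then split into two cases. If $w\in C_s$, then $w$ is itself compressed in iteration $s$, and the compress rule directly yields $\deg_{G[V_{s-1}]}(w)\le k$. Otherwise, the lower endpoint of the edge achieving the minimum is some $H$-neighbor $u^*\in C_s$ of $w$; since $u^*$ is the lower endpoint of $\{w,u^*\}$ and $w\notin C_s$, the node $w$ must lie in a layer strictly higher than $C_s$, so $w\in V_{s-1}$, and applying the compress rule to $u^*$ (which bounds the degree of every $V_{s-1}$-neighbor of $u^*$ by $k$) again gives $\deg_{G[V_{s-1}]}(w)\le k$.

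Next, I would verify that every $H$-neighbor $u$ of $w$ lies in $V_{s-1}$. By minimality of $s$, the lower endpoint of any $H$-edge $\{w,u\}$ lies in some $C_{s'}$ with $s'\ge s$. If $u$ is that lower endpoint, then $u\in C_{s'}$ is marked only in iteration $s'\ge s$, so $u\in V_{s-1}$. If instead $w$ is the lower endpoint, then $u$ is strictly higher than $w$ in the total order; since $w$'s own layer is at least $C_s$ (equal to $C_s$ in the first case, strictly higher in the second), $u$'s layer is strictly higher than $C_s$, and hence $u\in V_{s-1}$ as well. Combining this with the degree bound from the previous paragraph gives the desired $\deg_H(w)\le k$.

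The main obstacle I anticipate is the bookkeeping around the total ordering on layers and nodes: the argument crucially uses that $C_i$ precedes $R_i$ within iteration $i$ and that same-layer ties are broken by ID, both to ensure in the second case that $w$ really is strictly higher than $u^*$ (and hence still in $V_{s-1}$ when $u^*$ is compressed), and to ensure in the last paragraph that higher-in-the-order neighbors of $w$ have indeed not been marked by the start of iteration $s$. Once this ordering is handled carefully, the two scenarios "$w$ is itself compressed" and "an $H$-neighbor of $w$ is compressed while $w$ is still present" fit together exactly to invoke the compress rule and yield the uniform bound of $k$.
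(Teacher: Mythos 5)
Your proof is correct and uses essentially the same mechanism as the paper's: at the moment the lowest relevant node is compressed, the compress rule bounds the $G[V_{s-1}]$-degree of every node that is still present, and then minimality guarantees all $H$-neighbors are still present. The paper phrases this as a proof by contradiction and first shows the node has at most $k$ higher neighbors to deduce it has a lower compressed $H$-neighbor; your direct case split (on whether $w$ itself is the one compressed at step $s$) sidesteps that intermediate claim and is a mild streamlining of the same argument.
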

   \begin{proof}
    Let $E_C$ denote the set of those edges that have their lower endpoint in one of the layers $C_1,C_2\dots, C_{\ceil{\log_k n+1}}$. Assume for a contradiction that there is a node $u\in V(G[E_C])$ with $\deg_{G[E_C]}(u)>k$. The design of \cref{rake-and-compress} implies that $u$ has at most $k$ higher neighbors. Hence, there is at least one lower neighbor of $u$. Since all edges of $G[E_C]$ have their lower nodes in some compress layer, all lower neighbors of $u$ are in compress layers. Let $v\in C_j$, for some $j$, be the lowest among the neighbors of $u$. This implies that when $v$ was marked, $u$ still had more than $k$ unmarked neighbors, which implies that $v$ would not have been marked in the respective compress operation, yielding a contradiction. 
\end{proof}
Next we bound the diameter of the connected components in the graph induced by the raked nodes.
\begin{lemma}\label{diameter-non-special-part} 
        The diameter of a connected component in the graph induced by the raked nodes is at most $4(\log_k n+1)+2$.
    \end{lemma}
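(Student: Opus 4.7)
The plan is to exploit the layer-based total ordering on nodes to impose a rooted-tree structure on any connected component $H$ of the subgraph induced by the raked nodes, and then bound the depth of that rooted tree. The key observation is that when a node $u \in R_i$ was raked, it had degree at most $1$ in $G[V_{i-1} \setminus C_i]$, and this residual graph contains all nodes in $R_i$ together with all nodes in strictly higher layers. Therefore $u$ has at most one neighbor in $G$ (and hence in $H$) that is higher than $u$ in the fixed ordering. Orienting every edge of $H$ from its lower endpoint to its higher endpoint then gives each node out-degree at most one; since $H$ is a connected subgraph of the input tree, it is itself a tree, and a standard degree-sum argument shows that exactly one node has out-degree zero, namely the unique highest node of $H$, which I will take as the root.

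Next I would prove the main combinatorial step: along any root-bound path $u_0, u_1, \ldots, u_d$ in $H$, the rake-layer index of $u_{k+2}$ is strictly larger than that of $u_k$ for every $k \leq d-2$. If this failed, then $u_k$, $u_{k+1}$, and $u_{k+2}$ would all lie in the same rake layer $R_i$, so $u_{k+1}$ would have had both $u_k$ and $u_{k+2}$ as neighbors in $G[V_{i-1} \setminus C_i]$ at the moment it was raked, forcing its residual degree to be at least $2$ and contradicting the rake rule. This is the step I view as the heart of the argument; everything else is bookkeeping.

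From this strict-increase-every-two-steps property and the fact that rake-layer indices lie in $\{1, \ldots, \lceil \log_k n + 1 \rceil\}$, the depth of $H$ is at most $2\lceil \log_k n + 1 \rceil - 1$. The diameter of any rooted tree is at most twice its depth, giving diameter at most $4\lceil \log_k n + 1 \rceil - 2$; together with $\lceil \log_k n + 1 \rceil \leq \log_k n + 2$ this yields the claimed bound of $4(\log_k n + 1) + 2$. The main subtlety I anticipate is carefully formalizing the within-layer ID tiebreak so that the orientation is unambiguous and the out-degree count goes through cleanly, but this is routine once the layer-level picture is in place.
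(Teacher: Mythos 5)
Your argument is correct and takes essentially the same approach as the paper: bound the distance from any node to the unique highest node of its component along a monotone path whose rake-layer indices must grow, then double to get the diameter. The only notable difference is that the paper shows the layer index strictly increases at \emph{every} step of the path (except possibly the last), whereas you show a strict increase only every \emph{two} steps via the observation that three consecutive same-layer nodes would force the middle one to have residual degree at least $2$ when raked; your weaker invariant still yields the stated bound thanks to the slack in the constant.
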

    
    \begin{proof}
         Let $R$ be the set of raked nodes, and let $u$ be a node in a connected component $\mathcal{P}$ of $G[R]$. Let $r$ be the highest node in $\mathcal{P}$. There must exist such an $r$ since the ordering induced on the nodes is a total ordering. We will show that $u$ is at distance at most $2(\log_k n+1)+1$ from $r$ which would prove the claim. If $r= u$, we are done. Assume otherwise. Consider the unique path $p=(u=v_1,v_2,\dots, v_{\ell}=r)$ of length $\ell-1$ from $u$ to $r$. Let $j\in\{1,2,\dots, \ell-2\}$. Suppose $v_{j}$ is higher than $v_{j+1}$. Since $v_{j+1}$ has at most one neighbor higher than $v_{j}$ (since $v_j$ is in a rake layer), $v_{j+2}$ must be lower than $v_{j+1}$, and therefore also lower than $v_{j}$. By iterating this argument with increasing $j$, we obtain that $v_{\ell}=r$ is lower than $v_j$, a contradiction. Thus, $v_j$ is lower than $v_{j+1}$, which in turn is lower than $v_{j+2}$.
         
         Moreover, if $v_{j}$ and $v_{j+1}$ are in the same layer, $v_{j+1}$ would not have a higher neighbor (since $v_{j+1}$ is in a rake layer and $v_j$ and $v_{j+1}$ being raked at the same time means they do not have neighbors in higher layers), implying that $v_{j+1}$ is $r$. This contradicts the fact that $j\leq \ell-2$. Hence, the nodes $v_1,v_2,\dots,v_{\ell-1}$ are in different layers. Since there are at most $2(\log_k n+1)$ layers, it follows that $\ell-1\leq 2(\log_k n +1)+1$ and that $u$ is at distance at most $2(\log_k n +1)+1$ from $r$.
    \end{proof}
    
We are now ready to state and prove (the formal version of) \Cref{thm:nodesinformal}.

\begin{theorem}\label{thm:nodesveryformal}
    Let $\Pi$ be a node-edge-checkable problem and $f$ a monotonically non-decreasing, non-zero function.
    Assume that
    \begin{itemize}
        \item $\Pi$ admits an algorithm $\mathcal{A}$ of complexity $O(f(\Delta)+\log^*n)$ rounds on semi-graphs where $\Delta$ is the degree of the underlying graph, and 
        \item $\Pi^{\times}$ admits a valid solution on any valid input instance.
    \end{itemize}
    Then $\Pi$ can be solved in $O(f(g(n)) + \log^* n)$ rounds on trees, where $g$ is the function satisfying $g(n)^{f(g(n))} = n$.
\end{theorem}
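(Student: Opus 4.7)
The plan is to apply the rake-and-compress decomposition of \Cref{rake-and-compress} with parameter $k := g(n)$, which runs in $O(\log_{k} n)$ LOCAL rounds, to partition $V(G)$ into the compressed set $C$ and the raked set $R$, and thereby to partition $E(G)$ into $E_{CC}$, $E_{CR}$, and $E_{RR}$ according to which sides the two endpoints of an edge lie on. The two structural facts I would rely on are that $G[C] \subseteq G[E_C]$ has maximum degree at most $k$ by \Cref{lemma-degree-bound}, and that each connected component of $G[R]$ has diameter $O(\log_{k} n)$ by \Cref{diameter-non-special-part}. The algorithm then proceeds in two phases: first, all half-edges incident to a compressed node are labeled by running $\fA$ on an auxiliary low-degree semi-graph, and second, the remaining half-edges are labeled by gathering each raked component and invoking the solvability of $\Pi^{\times}$ locally.

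For the first phase I would form the semi-graph $G_1$ with $\Vs(G_1) = C$, one rank-$2$ edge per edge of $E_{CC}$, and one rank-$1$ edge per edge of $E_{CR}$ attached at its compressed endpoint. Its underlying graph is $G[C]$, so the maximum underlying degree is at most $k$, and by construction $\ndeg_{G_1}(v) = \ndeg_G(v)$ for every $v \in C$. Running $\fA$ on $G_1$ therefore takes $O(f(k) + \log^* n)$ rounds and produces labels such that at every $v \in C$ the multiset of labels sits in $\mcN_\Pi^{\ndeg_G(v)}$, every rank-$2$ edge receives a pair in $\mcE_\Pi^2$, and every rank-$1$ edge corresponding to an $E_{CR}$ edge $e$ receives a label $\chi_e$ with $\{\chi_e\} \in \mcE_\Pi^1$. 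For the second phase I would dually build the semi-graph $G_2$ with $\Vs(G_2) = R$, rank-$2$ edges $E_{RR}$, and rank-$1$ edges for $E_{CR}$ attached at their raked endpoints, and equip it with $\hin$ that is the trivial list $\mcE_\Pi^2$ on every rank-$2$ edge and is $\mcE_{\Pi, \{\chi_e\}}^1 \in \mcL_\Pi^1$ on each rank-$1$ edge; the underlying graph of $G_2$ equals $G[R]$, so every component has diameter $O(\log_k n)$ and a valid $\Pi^{\times}$ solution per component can be computed by gathering in $O(\log_k n)$ rounds and invoking the solvability hypothesis locally. The two phases together run in $O(\log_k n + f(k) + \log^* n) = O(f(g(n)) + \log^* n)$ rounds because $\log_{g(n)} n = f(g(n))$.

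Correctness reduces to a per-edge check: the $E_{CC}$ and $E_{RR}$ constraints are met by the corresponding phase; an $E_{CR}$ edge $e$ is valid because the raked-side label $\chi_e'$ chosen from $\mcE_{\Pi, \{\chi_e\}}^1$ completes $\chi_e$ into a pair in $\mcE_\Pi^2$; and the per-node constraints hold because each node lies in exactly one of the two semi-graphs, in each case with its full $G$-degree, so the node constraint $\mcN_\Pi^{\ndeg_G(v)}$ enforced by the respective phase is the one we actually need. The main obstacle I expect to navigate is justifying that $\fA$ actually succeeds on $G_1$, i.e., that the rank-$1$ encoding of the $E_{CR}$ boundary does not render $G_1$ infeasible as a $\Pi$-instance; this should follow cleanly from the assumed solvability of $\Pi^{\times}$, since a $\Pi$-instance is the special case of a $\Pi^{\times}$-instance in which every $\hin(e)$ is the trivial list $\mcE_\Pi^{\rank(e)}$.
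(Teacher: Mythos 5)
Your proposal is correct and matches the paper's proof essentially step for step: the semi-graphs you call $G_1$ and $G_2$ are exactly the paper's $T_C$ and $T_R$ (edges of $T$ with at least one endpoint in the relevant node set, so $E_{CR}$ edges appear as rank-$1$ in each), the lists $\hin(e) := \mcE_{\Pi,\chi(e)}^{\rank(e)}$ you assign in the raked part coincide with the paper's, and the runtime and correctness bookkeeping (each node carries its full degree in exactly one of the two semi-graphs, each edge's two half-edge labels jointly land in $\mcE_\Pi^2$ via the list definition) is the same. The "obstacle" you flag at the end — whether $\fA$ can even succeed on $G_1$ — is already discharged by the hypothesis that $\fA$ is a correct $\Pi$-algorithm on all semi-graphs, though your observation that $\Pi$ is the special case of $\Pi^\times$ with trivial edge lists gives a clean alternative justification.
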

\begin{proof}
    Let $T$ be a tree with $n$ nodes and set $k := g(n)$.
    In the following we describe our algorithm for~\Cref{thm:nodesveryformal}.
    
    Apply \cref{rake-and-compress} on tree $T$ (with parameter $k$). Let $C_1,C_2,\dots, C_{\log_{k}n+1}$ be the obtained compress layers and $R_1,R_2,\dots, R_{\log_{k}n+1}$ the obtained rake layers. Let $C$ be the set of compressed nodes and $R$ the set of raked nodes.
    Let $T_C$ be the semi-graph with node set $\Vs(T_C) = C$, edge set $\Es(T_C) = \{ e \in E(T) \mid \exists v \in C \text{ such that } v \in e \}$ and half-edge set $H(S) = \{ \{ v, e \} \mid v \in C, e \in \Es(T_C), v \in e \}$.
    Let $T_R$ be the semi-graph obtained analogously (by replacing $R$ with $C$ in the definition of $T_C$).
    Note that some edges that have rank $2$ in $T$ might have lower rank in $T_C$ or $T_R$ (similarly to the situation for nodes and their degrees).
    
    We now run \cref{node-solving algorithm}, given below.
    \begin{algorithm}
        \caption{Edge-list solver}\label{node-solving algorithm}
        \begin{algorithmic}[1]
            \Statex \textbf{Input}: Semi-graphs $T_C,T_R$
            \State Run Algorithm $\mathcal{A}$ on the semi-graph $T_C$. 
            \State Consider the semi-graph $T_R$.
            \begin{itemize}
                \item For each $r \in \{ 0, 1, 2 \}$ and each edge $e\in \Es(T_R)$ with rank $r$, let $\chi(e)$ be the multiset of output labels that have already been assigned to half-edges incident on $e$ in $T$.
                 Set $\hin(e):=\mcE_{\Pi,\chi(e)}^{r}$.
                \item Consider a connected component $\mathcal{C}$ in $T_R$ and its underlying graph $\mathcal{C}'$. Now let the highest node in $\mathcal{C}'$ collect the entire connected component, compute a correct solution of $\Pi^{\times}$ on $(\mathcal{C},{\hin}_{\mid \mathcal{C}})$ (where ${\hin}_{\mid \mathcal{C}}$ denotes function $\hin$ restricted to $\mathcal C$), and inform all other nodes in $\mathcal{C}'$ about the solution (upon which each node outputs its local part of the solution). Let the algorithm that solves $\Pi^{\times}$  on $T_R$ be called $\mathcal{B}$.
            \end{itemize}
        \end{algorithmic}
    \end{algorithm}
    This concludes the description of our algorithm.
    In the following we analyze its runtime and show that the obtained output is correct.
    We start with the runtime.
    
     Observe that the underlying graph of semi-graph $T_C$ is the subgraph of graph $T$ induced by the compressed nodes in $T$. Hence, by \cref{lemma-degree-bound}, the degree of the underlying graph of $T_C$ is at most $k$.
     It follows that Line 1 of \cref{node-solving algorithm} fixes the output labels of the half-edges in $T_C$ in $O(f(g(n)+\log^*n)$ rounds.  Note that the underlying graph of semi-graph $T_R$ is the subgraph of graph $T$ induced by the raked nodes in $T$. Hence, by \cref{diameter-non-special-part}, Line 2 of \cref{node-solving algorithm} runs in $O(\log_k n) = O(f(g(n)))$ rounds.
     As each of the $O(\log_k n)$ iterations in the execution of \cref{rake-and-compress} can be performed in a constant number of rounds, we therefore obtain an overall runtime of $O(f(g(n)+\log^*n)$ rounds.

     Consider a node $u$ and let $\chi$ be the multiset of output labels assigned by our algorithm to the half-edges incident on $u$. Refer to $\chi$ as the \emph{node configuration of $u$}. We say $\chi$ is a \emph{valid node configuration of $u$} for problem $\Pi$ if $\chi\in\mcN_{\Pi}^{\ndeg(u)}$. Similarly, consider an edge $e$ and let $\mset$ be the multiset of output labels assigned by our algorithm to the half-edges incident on $e$. Refer to $\mset$ as the \emph{edge configuration of $u$}. We say $\mset$ is a \emph{valid edge configuration of $e$} if $\mset\in\mcE_{\Pi}^{\rank(e)}$. Hence, a solution to $\Pi$ is correct if all the nodes and edges have valid configurations. We now proceed to show the correctness of \cref{node-solving algorithm} by going through the following exhaustive cases.
     \begin{enumerate}[label=(\roman*)]
        \item Assume there is a node $u$ of semi-graph $T$ on which the node configuration is not valid for $\Pi$. Since \cref{node-solving algorithm} solves $\Pi$ on $T_C$ and $\Pi^{\times}$ on $T_R$, all the half-edges incident on $u$ are labeled entirely by either algorithm $\mathcal{A}$ or by Algorithm $\mathcal{B}$. Algorithm $\mathcal{A}$ and Algorithm $\mathcal{B}$ correctly solve problem $\Pi$ and $\Pi^{\times}$, respectively (due to the assumptions made in \Cref{thm:nodesveryformal}), and hence output for $u$ a configuration that is in $\mcN_{\Pi}^{\ndeg(u)}$. This yields a contradiction.
        \item Assume there is an edge $e$ of semi-graph $T$ on which the edge configuration is not valid for $\Pi$. Suppose Algorithm $\mathcal{A}$ labels $q$ half-edges incident on $e$ and let the set of these labels be $\chi$. Since algorithm $\mathcal{A}$ solves $\Pi$, $\chi\in \mcE_{\Pi}^q$. If $q=2$, then the output of Algorithm $\mathcal{A}$ (and therefore also of the overall algorithm) for $e$ is $\chi\in \mcE_{\Pi}^2$ which implies that the edge configuration on $e$ is valid. Hence, $q<2$ and Algorithm $\mathcal{B}$ labels the $2-q$ yet unlabeled half-edges incident on $e$ which forms a multiset $\mset$. By definition, $\mset\in \hin(e)=\mcE_{\Pi,\chi}^q$. It follows that $\mset\cup\chi\in\mcE_{\Pi}^{q+2-q}=\mcE_{\Pi}^2$ and that the edge configuration on $e$ is valid. This yields a contradiction.

    \end{enumerate}
\end{proof}

\section{Proving \Cref{thm:edgesinformal}}\label{sec:proveedges}
Similar to our approach in \Cref{sec:provenodes}, we start by developing a process \emph{Decomposition} to partition the nodes of the input graph into layers. The process takes as input a graph of arboricity $a$ and two integer parameters $b$ and $k$ such that $a<b$ and $5a\leq k$. It outputs a partition of the nodes into layers $C_1,C_2\dots, C_{\kappa\log_{\frac{k}{a}}n}$ such that $u\in C_i$ has at most $k$ neighbors in $\bigcup_{j\geq i}C_j$ and at most $b$ neighbors in $\bigcup_{j\geq i}C_j$ with degree at least $k+1$ in $G[\bigcup_{j\geq i}C_j]$. The core component of the \emph{Decomposition} process is the repeated iteration of the following \emph{Compress} operation:

\emph{Compress($G,b,k$)}: Mark a node $u\in V(G)$ if the degree of $u$ is at most $k$ and at most $b$ neighbors of $u$ have degree greater than $k$.

\begin{algorithm}
    \caption{Decomposition}\label{Decomposition}
    \begin{algorithmic}[1]
        \Statex \textbf{Input}: A graph $G$ with an upper bound $a$ for the arboricity of $G$, and integers $b$ and $k$ such that $a<b$ and $5a\leq k$.
        \State All nodes are unmarked initially.
        \State Set $V_0\coloneqq V(G)$.
        \State For $i=1,2\dots,\lceil{\kappa}\log_{\frac{k}{a}}n\rceil + 1$:
                \begin{itemize}
                    \item Set $C_i\coloneqq\emptyset$.
                    \item Run Compress($G[V_{i - 1}],b,k$).
                    \item Set $C_i$ to be the set of nodes marked in this iteration. Set $V_i\coloneqq V_{i-1}\setminus C_i$.
                \end{itemize}
    \end{algorithmic}
\end{algorithm}
We now show that \cref{Decomposition} marks all nodes in $G$ reasonably fast for the parameter choice $b = 2a$.

\begin{lemma}\label{lem:newdecompruntime}
    Let $G$ be a graph on n nodes with arboricity at most $a$. Given integers $b=2a$ and $k$ such that $5a \leq k$, \cref{Decomposition} marks all nodes of $G$ in $\lceil{\kappa}\log_{\frac{k}{a}}n\rceil + 1$ rounds.
   
\end{lemma}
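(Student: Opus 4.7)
The plan is to establish a strong per-iteration shrinkage of the unmarked set and then iterate it $O(\log_{k/a} n)$ times. Concretely, I will show that $|V_i| < \tfrac{4a}{k+1}|V_{i-1}|$ in every iteration $i$, so after $T := \lceil\kappa\log_{k/a}n\rceil + 1$ iterations we have $|V_T| < n \cdot \bigl(\tfrac{4a}{k+1}\bigr)^T$. A short calculation confirms this is below $1$ whenever $\kappa = 10$ and $k \geq 5a$: after taking logarithms the needed inequality reduces to $(\kappa-1)\log(k/a) \geq \kappa\log 4$, which for $\kappa=10$ becomes $k/a \geq 4^{10/9} \approx 4.67$, implied by $k \geq 5a$. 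Since $|V_T|$ is a non-negative integer, $|V_T| < 1$ then forces $V_T = \emptyset$, i.e., every node has been marked.

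For the per-iteration bound, I partition the nodes remaining unmarked after iteration $i$ according to the two ways a node can fail the \emph{Compress} condition. Let $H_i \subseteq V_{i-1}$ denote the nodes of degree greater than $k$ in $G[V_{i-1}]$, and let $M_i := V_i \setminus H_i$ be the low-degree unmarked nodes, i.e., those with more than $b=2a$ neighbors in $H_i$. Arboricity applied to $G[V_{i-1}]$ gives $2|E(G[V_{i-1}])| \leq 2a|V_{i-1}|$, and since every $H_i$-node contributes at least $k+1$ to the degree sum, $|H_i| \leq \tfrac{2a|V_{i-1}|}{k+1}$. The key idea for bounding $|M_i|$ is to apply arboricity \emph{not} to $G[V_{i-1}]$ but to the strictly smaller induced subgraph $G[V_i] = G[H_i \cup M_i]$: it has at most $a(|H_i|+|M_i|-1)$ edges, yet (by definition of $M_i$) contains at least $|M_i|(b+1)$ edges between $M_i$ and $H_i$. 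Rearranging yields $|M_i| \leq \tfrac{a|H_i|}{b+1-a} = \tfrac{a|H_i|}{a+1} < |H_i|$ for $b=2a$, hence $|V_i| = |H_i| + |M_i| < 2|H_i| < \tfrac{4a|V_{i-1}|}{k+1}$.

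The main conceptual obstacle is precisely this $|M_i|$-bound. A naive application of arboricity only to $G[V_{i-1}]$ gives $|M_i|(b+1) \leq |E(G[V_{i-1}])| \leq a|V_{i-1}|$, hence $|M_i| = \Theta(|V_{i-1}|)$ when $b = 2a$---a mere constant-factor shrinkage that would yield only an $O(\log n)$ iteration count, not the desired $O(\log_{k/a}n)$. Restricting instead to $G[V_i]$ ties $|M_i|$ to $|H_i|$ rather than to $|V_{i-1}|$, and once this is combined with the $\tfrac{a}{k+1}$-type bound on $|H_i|$, it compounds into a shrinkage factor proportional to $a/k$, which drives the $\log_{k/a}$ dependence. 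The parameter choice $b = 2a$ is calibrated precisely so that the denominator $b+1-a = a+1$ is large enough to ensure $|M_i| < |H_i|$, keeping the constants clean.
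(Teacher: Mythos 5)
Your proposal is correct, and at its core it is the same argument as the paper's: apply arboricity once to $G[V_{i-1}]$ together with a degree-sum count to bound the number of large (degree $>k$) survivors, and apply arboricity a second time to a graph restricted to the \emph{surviving} nodes to tie the number of small survivors to the number of large ones. You have, however, organized the argument more cleanly. The paper splits into two cases (``at least half the survivors are large'' vs.\ ``at least half are small'') and in the second case works with the bipartite subgraph $H$ between small survivors and large nodes, first deducing that $H$ has many large nodes and only then counting edges incident on large nodes; you instead bound $|H_i|$ and $|M_i|$ directly and add them, with no case split and no detour through node-counts of $H$. Your formulation makes explicit the key point that arboricity must be applied to $G[V_i]$ (or a subgraph of it) rather than to $G[V_{i-1}]$ to get the $\Theta(a/k)$ shrinkage rather than a mere constant-factor one; the paper does the same thing implicitly via $H \subseteq G[V_{i+1}]$, but does not flag it. The paper's per-iteration bound is $n_{i+1} \leq 4a(n_i-1)/k$ versus your $|V_i| < 4a|V_{i-1}|/(k+1)$; both comfortably suffice for the stated iteration count, and the concluding numeric verification ($k/a \geq 4^{\kappa/(\kappa-1)}$ in your notation, $\log_{5/4}5 \leq 10$ in the paper's) is the same calculation in different form.

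Two very minor points you may want to state explicitly to be airtight: (i) the displayed per-iteration inequality is only meaningful while $V_{i-1}\neq\emptyset$; once some $V_i$ is empty, all later ones are too, so the conclusion is unaffected; and (ii) the inclusion $H_i\subseteq V_i$ (high-degree nodes cannot be marked) is what makes $V_i = H_i \sqcup M_i$ a partition --- you use this but don't say it.
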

\begin{proof}
    Consider sets $V_i$ and $V_{i+1}$ for any $i\in\{0,1,\dots,\lceil{\kappa}\log_{\frac{k}{a}}n\rceil\}$. Let $n_i=|V_i|$ and $n_{i+1}=|V_{i+1}|$. Let $u\in V_i$ be called a \emph{large node} if $\deg_{G[V_{i}]}(u)>k$ and a \emph{small node} otherwise. Consider the following exhaustive cases.
    \begin{enumerate}[label=(\roman*)]
        \item Case 1: $V_{i+1}$ contains at least $\frac{n_{i+1}}{2}$ large nodes.

              As the arboricity of $G[V_i]$ is at most $a$, the number of edges in $G[V_{i}]$ is at most $a\cdot(n_{i}-1)$. Let the number of edges in $G[V_i]$ incident on at least one large node be $d$. Then,
             \[
                  a\cdot(n_{i}-1) \geq d
                  \geq \frac{1}{2}\cdot \frac{n_{i+1}}{2}\cdot k,
            \]
            which implies
            \[
                   n_{i+1} \leq \frac{4a\cdot(n_{i}-1)}{k}.
            \]

          \item Case 2: $V_{i+1}$ contains at least $\frac{n_{i+1}}{2}$ small nodes.
                 
                 Let $p$ be the number of small nodes that are contained in $V_{i+1}$.   
                 Consider the bipartite subgraph $H$ of $V_i$ induced by the edges between small nodes that are contained in $V_{i + 1}$ and large nodes. If a small node is adjacent to fewer than $b$ large nodes in $V_{i}$, it is added to $C_{i+1}$ and hence not contained in $V_{i+1}$. Therefore, a small node that is contained in $V_{i+1}$ has at least $b$ large nodes as neighbors in $V_{i}$. Hence, there are at least $b\cdot p$ edges in $H$. Since $H$ has arboricity at most $a$, there must be at least $\frac{b\cdot p}{a}$ nodes in $H$. Since there are precisely $p$ small nodes in $H$, there are at least $p\cdot(\frac{b}{a}-1)$ large nodes in $H$. Let the number of edges in $G[V_i]$ incident on at least one large node be $d$. Then
                 \begin{align*}
                     & &a\cdot(n_{i}-1)  &\geq d& \\
                     &&&\geq \frac{p\cdot k}{2}\cdot \left(\frac{b}{a}-1\right)\\
                     &&&\geq \frac{k}{2}\cdot \frac{n_{i+1}}{2}\left(\frac{b}{a}-1\right),    
                 \end{align*}
                 which implies
                 \[n_{i+1} \leq \frac{4a\cdot(n_{i}-1)}{k}.\]
    \end{enumerate}
    Hence, in either case, the number of unmarked nodes reduces by a factor of at least $\frac{k}{4a}$ after each application of the compress operation. This implies that the algorithm marks all nodes in $G$ after $\log_{\frac{k}{4a}}n + 1 \leq \lceil{\kappa}\log_{\frac{k}{a}}n\rceil + 1$ iterations (where the inequality is due to $\log_{5/4} 5 \leq 10$).
    Since each iteration can be performed in one round, we obtain the lemma statement.
\end{proof}

Before proceeding to the statement and proof of the formal version of \Cref{thm:edgesinformal}, we collect some properties of the decomposition returned by \Cref{Decomposition}.
We start by defining the necessary terminology.

We will refer to the sets $C_i$ as \emph{layers}. We fix a total ordering on the 
layers in the following way: for two distinct layers $L_1$ and $L_2$, 
we say that \emph{$L_1$ is higher than $L_2$} if the nodes in $L_1$ were marked after the nodes 
in $L_2$. We may equivalently say that \emph{$L_2$ is lower than $L_1$}. 
Similarly, we fix a total ordering on the nodes in the following way: for 
two distinct nodes $u$ and $v$, \emph{$u$ is lower than $v$} (or 
equivalently \emph{$v$ is higher than $u$}) if $u$ is in a lower layer than 
$v$. If both $u$ and $v$ are in the same layer, the one with higher ID is considered to be the higher node.
For an edge $e=\{u,v\}$, if $u$ is lower than $v$, 
call $v$ the \emph{higher endpoint of $e$} and $u$ the \emph{lower endpoint 
of $e$} .

Consider a graph $G$ of arboricity at most $a$. Let $k$ be a parameter greater than $5a$ and let $b = 2a$. Apply \cref{Decomposition} on $G$ and let $C_1,C_2\dots, C_{\lceil\kappa\log_{\frac{k}{a}}n\rceil + 1}$ be the partition of the nodes obtained from the algorithm. Consider an edge $e=\{u,v\}$ with $u\in C_i$, $v\in C_j$, and $i<j$. Call $e$ \emph{atypical for $u$} if $\deg_{G[V_{i-1}]}(v)>k$. Note that the compress operation ensures that there are at most $2a$ edges that are atypical for $u$. Call an edge \emph{atypical} if it is atypical for its lower endpoint and call the rest of the edges \emph{typical}. Let $E_1$ be the set of atypical edges and $E_2$ the set of typical edges.
    
     \begin{lemma}\label{degree-bound on arb graphs}
         The graph induced by the typical edges has maximum degree at most $k$.
     \end{lemma}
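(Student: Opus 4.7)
The plan is to identify, for each node $u$, a single ``witness iteration'' $j^*$ such that every typical edge incident to $u$ has its other endpoint in $V_{j^*-1}$, and then to conclude using the degree bound guaranteed by the Compress operation at iteration $j^*$.

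Fix $u$ and let $i$ be the layer of $u$ (i.e., $u \in C_i$). First I would define $j^* = j^*(u)$ to be the smallest integer $j \geq 1$ such that $\deg_{G[V_{j-1}]}(u) \leq k$. Since $u$ was marked in iteration $i$ by the Compress operation, $\deg_{G[V_{i-1}]}(u) \leq k$, so $j^* \leq i$ is well-defined. Because $V_{j-1}$ is monotonically non-increasing in $j$, the map $j \mapsto \deg_{G[V_{j-1}]}(u)$ is non-increasing as well, so $\deg_{G[V_{j^*-1}]}(u) \leq k$ whereas $\deg_{G[V_{j-1}]}(u) > k$ for every $j < j^*$.

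The core step is to show that every typical edge $\{u,v\}$ incident to $u$ satisfies $v \in V_{j^*-1}$. I would split into two cases according to where $v$ lies. If $v$ is in a layer $C_j$ with $j \geq i$, then $v \in V_{i-1} \subseteq V_{j^*-1}$; this case automatically absorbs all same-layer edges, for which the paper's ``atypical for'' condition is vacuous (it requires the two endpoints to be in distinct layers) and so such edges are typical by default. If instead $v \in C_j$ with $j < i$, then $v$ is the lower endpoint of $\{u,v\}$, and typicality of $\{u,v\}$ unpacks, by the definition of ``atypical for $v$'', into $\deg_{G[V_{j-1}]}(u) \leq k$. By minimality of $j^*$, this forces $j \geq j^*$, and hence again $v \in V_{j^*-1}$.

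Combining, the number of typical edges incident to $u$ is at most the number of neighbors of $u$ in $V_{j^*-1}$, which is exactly $\deg_{G[V_{j^*-1}]}(u) \leq k$, yielding the claim. The main (mild) obstacle is bookkeeping: correctly identifying $j^*$ as the first iteration at which $u$'s residual degree drops to at most $k$, and carefully handling the asymmetry in the definition of ``atypical for'' — which is specified only for cross-layer edges. Once one reinterprets the typicality of a lower-layer edge $\{u,v\}$ as a statement about $u$'s own residual degree at an earlier iteration, the argument reduces to the elementary monotonicity of $\deg_{G[V_{j-1}]}(u)$ in $j$.
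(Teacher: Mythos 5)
Your proof is correct and follows essentially the same approach as the paper's: both hinge on the fact that a typical edge to a lower-layer neighbor $v \in C_j$ forces $\deg_{G[V_{j-1}]}(u) \leq k$, and then bound the count of typical edges by a single residual degree. The paper argues by contradiction via the \emph{lowest} typical neighbor, whereas you argue directly via the first iteration $j^*$ at which $u$'s residual degree drops to at most $k$; your direct version also handles same-layer edges a bit more explicitly.
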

     \begin{proof}
     
     Consider the graph $G[E_2]$ induced by $E_2$. The lemma claims that  $\Delta(G[E_2])\leq k$.
     Suppose for a contradiction that $\deg_{G[E_2]}(u)\geq k+1$ for some $u$, and let $i$ be the index such that $u\in C_i$. If $u$ has no neighbors in $G[E_2]$ lower than $u$, then $\deg_{G[V_{i-1}]}(u) \geq k+1$, which would imply $u\notin C_i$, a contradiction. Therefore, $u$ has a 
     lower neighbor in $G[E_2]$. Let $v\in C_j$ be the lowest neighbor of 
     $u$ such that $e=\{u,v\}\in E_2$. This implies that $e$ is 
     an atypical edge for $v$, which implies $e\notin E_2$, a 
     contradiction. Hence $\Delta(G[E_2])\leq k$. 
     \end{proof}   
     
    Any node $u \in G[E_1]$ has at most $2a$ higher neighbors. Create $2a$ edge-disjoint graphs $G[F_1],G[F_2],\dots, G[F_{2a}]$ as follows.
     \begin{enumerate}
         \item With colors from the palette $\{1,2,\dots, 2a\}$, each node $u\in G[E_1]$ colors its incident edges to higher neighbors differently. This produces a coloring $c$ (which may not be proper) on the edges in $G[E_1]$.
         \item For $i\in\{1,2,\dots, 2a\}$, define $F_i=\{e\in E_1 \mid c(e)=i\}$.
     \end{enumerate}

     Note that each node in $G[F_i]$ has at most one higher neighbor in $G[F_i]$ and none in the same layer. Hence each of the $G[F_i]$ is a forest and a $3$-coloring $c_i$ of its vertices can be computed in $O(\log^*n)$ rounds~\cite{goldberg1987parallel}. We use this to create a partition of $G[F_i]$ as follows.
     \begin{itemize}
         \item For $i\in\{1,2,\dots 2a\}$ and $j\in\{1,2,3\}$, define $F_{i,j}=\{e\in F_i \mid c_i(\text{higher endpoint of } e)=j\}$.
     \end{itemize}
     
Note that each connected component of any $G[F_{i,j}]$ is a star graph with the highest node in the connected component being the center of the connected component.

We are now ready to state and prove (the formal version of) \Cref{thm:edgesinformal}, which is parameterized by some integer $\rho$.

\begin{theorem}\label{thm:edgesveryformal}
    Let $\Pi$ be a node-edge-checkable problem and $f$ a monotonically non-decreasing, non-zero function. Assume that
    \begin{itemize}
        \item $\Pi$ admits an algorithm $\mathcal{A}$ of complexity $O(f(\Delta)+\log^*n)$ rounds on semi-graphs where $\Delta$ is the underlying degree of the graph, and 
        \item $\Pi^{*}$ admits a valid solution on any valid input instance.
    \end{itemize}
    Then, for any positive integer $\rho$, problem $\Pi$ can be solved in $O\big(a+\frac{\rho \cdot f(g(n)^{\rho})}{\rho-\log_{g(n)}a}+\log^{*}n\big)$ rounds on graphs of arboricity at most $a\leq \frac{g(n)^\rho}{5}$, where $g$ is the function that satisfies $g(n)^{f(g(n))}=n$.
\end{theorem}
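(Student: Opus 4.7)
My approach follows the edge-based template sketched in the introduction: decompose the edge set into a ``typical'' subgraph of bounded maximum degree and an ``atypical'' subgraph that itself decomposes into $O(a)$ star-collections of constant diameter, then solve $\Pi$ on the typical part via $\mathcal{A}$ and solve $\Pi^{*}$ on each atypical piece by local gathering. Concretely, I would set $b := 2a$ and $k := g(n)^\rho$ (permissible because $a \le g(n)^\rho/5$), apply \Cref{Decomposition} to obtain the node layering in $O(\log_{k/a} n)$ rounds via \Cref{lem:newdecompruntime}, and split $E(G)$ into atypical edges $E_1$ and typical edges $E_2$ as in the paragraph preceding the theorem. \Cref{degree-bound on arb graphs} gives $\Delta(G[E_2]) \le k$, while the construction of $F_i$ and $F_{i,j}$ (using the natural $2a$-edge-coloring of $E_1$ together with an $O(\log^{*}n)$-round $3$-coloring of each forest $G[F_i]$) produces $6a$ edge-disjoint star-collections $G[F_{i,j}]$ whose connected components are stars of constant diameter.

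With the decomposition in hand, the algorithm first runs $\mathcal{A}$ on the semi-graph $S_2$ induced by $E_2$, labeling every half-edge belonging to an $E_2$-edge in $O(f(k)+\log^{*}n) = O(f(g(n)^\rho)+\log^{*}n)$ rounds. It then iterates through the $6a$ star-collections in some fixed order, and at iteration $j$ solves $\Pi^{*}$ on the $j$-th collection in $O(1)$ rounds by letting each star-center gather its star, compute a local solution, and broadcast the labels back. The only non-trivial part of this step is constructing the list input $\hin_j$ for iteration $j$: for every node $v$ participating in the $j$-th collection, let $\psi_v^{(j-1)}$ be the multiset of output labels already assigned to $v$'s half-edges by $\mathcal{A}$ and by the first $j-1$ iterations, and set $\hin_j(v) := \mathcal{N}_{\Pi,\,\psi_v^{(j-1)}}^{d_{v,j}}$, where $d_{v,j}$ is $v$'s degree in the $j$-th collection. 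Each $\hin_j(v)$ is computable locally from $v$'s current knowledge.

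The correctness argument goes by induction on $j$, maintaining the invariant that $\psi_v^{(j)} \in \mathcal{N}_\Pi^{|\psi_v^{(j)}|}$ for every $v$. The base case $j=0$ holds because $\mathcal{A}$ correctly solves $\Pi$ on $S_2$, so each node's $E_2$-configuration is valid for its $E_2$-degree; the inductive step is immediate from the definition of $\mathcal{N}_{\Pi,\psi}^{d_{v,j}}$ combined with the fact that the $\Pi^{*}$-solver assigns to $v$ a multiset from $\hin_j(v)$. The invariant also certifies that $\hin_j(v) \in \mathcal{L}_\Pi^{d_{v,j}}$, so the input handed to the $\Pi^{*}$-solver is always a valid input instance, allowing us to invoke the assumption that $\Pi^{*}$ admits a valid solution on any valid input instance. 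After iteration $6a$, $|\psi_v^{(6a)}| = \deg_G(v)$, so the final node-configuration lies in $\mathcal{N}_\Pi^{\deg(v)}$; edge-constraints hold automatically since each edge's two half-edges are labeled inside a single invocation (either $\mathcal{A}$ on $S_2$ or $\Pi^{*}$ on exactly one $G[F_{i,j}]$), and that invocation enforces the required $\mathcal{E}_\Pi^{\rank(e)}$ constraint.

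For the runtime, plugging $k = g(n)^\rho$ and using $g(n)^{f(g(n))}=n$ yields $\log_{k/a} n = f(g(n))/(\rho - \log_{g(n)} a)$; combining this with the $O(f(g(n)^\rho)+\log^{*}n)$ cost of $\mathcal{A}$ and the $O(a)$ cost of the $6a$ sequential star iterations, and bounding $f(g(n)) \le f(g(n)^\rho)$ by monotonicity together with $(\rho - \log_{g(n)} a) \le \rho$, gives the claimed total $O\bigl(a + \rho f(g(n)^\rho)/(\rho - \log_{g(n)} a) + \log^{*}n\bigr)$. The main obstacle, as in \Cref{thm:nodesveryformal}, is the bookkeeping required to guarantee that every intermediate list $\hin_j$ is a valid list input throughout all $6a$ sequential iterations; the fact that a single node can participate (as center or leaf) in many of the $6a$ collections makes the inductive setup here slightly more delicate than in the node-partition setting of \Cref{thm:nodesveryformal}.
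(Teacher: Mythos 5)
Your proposal reproduces the paper's proof essentially step for step: same parameter choices $b=2a$, $k=g(n)^\rho$, same use of \Cref{Decomposition} and \Cref{lem:newdecompruntime}, same typical/atypical split $E_1,E_2$, same construction of the $6a$ star-collections $G[F_{i,j}]$ via the $2a$-edge-coloring and the $3$-coloring of each forest, the same two-phase algorithm (run $\mathcal A$ on $G[E_2]$, then solve $\Pi^*$ sequentially on the star-collections with locally-computed list inputs), and the same runtime simplification from $\log_{k/a}n$ to $\nicefrac{f(g(n))}{\rho-\log_{g(n)}a}$ followed by absorbing $f(g(n)^\rho)$ into the middle term. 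Your explicit inductive invariant $\psi_v^{(j)}\in\mcN_\Pi^{|\psi_v^{(j)}|}$ is exactly the argument the paper gives (somewhat more tersely) in case (ii) of its correctness analysis, and your observation that it simultaneously certifies validity of each $\hin_j$ and of the final node configuration is the right way to organize it; the edge-constraint argument is likewise identical.
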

\begin{proof}
    Let $G$ be a graph (i.e., also a semi-graph) with $n$ nodes and set $k := g(n)^{\rho}$ and $b := 2a$.
    In the following, we describe our algorithm for~\Cref{thm:edgesveryformal}.

    We start by applying~\Cref{Decomposition} on $G$.
    Let $E_1$ and $E_2$ be as defined above.
    We continue by creating, for any $i\in\{1,2,\dots2a\}$ and $j\in\{1,2,3\}$, the subgraphs $G[F_{i,j}]$ as described above. 
    Then we run \cref{edge-solving algorithm} on semi-graph $G$, which is as given in the following.
    \begin{algorithm}
        \caption{Node-list solver}\label{edge-solving algorithm}
        \begin{algorithmic}[1]
            \Statex \textbf{Input}: Semi-graphs $G[E_2],G[F_{1,1}],\dots,G[F_{2a,3}]$.
            \State Run algorithm $\mathcal{A}$ on the semi-graph $G[E_2]$. 
            \State For $i = 1,2,\dots, 2a$ and $j\in\{1,2,3\}$, do the following:
            \begin{itemize}
                \item For each $d\in\{1,2,\dots,\Delta\}$ and each node $u\in \Vs(G[F_{i,j}])$ with degree $d$, let $\chi(u)$ be the multiset of output labels that have already been assigned to half-edges incident on $u$ in $G$. Set $\hin^{i,j}(u) := \mcN_{\Pi,\chi(u)}^{d}$.
                \item Consider a connected component $\mathcal{C}$ in semi-graph $G[F_{i,j}]$. Let the highest node in $\mathcal{C}$ collect the entire connected component and compute a valid solution to $\Pi^{*}$ on $(\mathcal{C},{\hin^{i,j}}_{\mid \mathcal{C}})$ (where ${\hin^{i,j}}_{\mid\mathcal{C}}$ is the restriction of $\hin^{i,j}$ to the component $\mathcal{C}$) and send the solution to all other nodes in the component. Each node subsequently outputs its local part of the computed solution. Let the algorithm that solves $\Pi^*$ on $(G[F_{i,j}],{\hin^{i,j}})$ be called $\mathcal{A}_{i,j}$.
            \end{itemize}
        \end{algorithmic}
    \end{algorithm}
    
    This concludes the description of our algorithm.
    In the following, we analyze its runtime and show that the obtained output is correct.
    We start with the runtime.
    
    Line 1 in \cref{edge-solving algorithm} runs in $O(f(k)+\log^{*}n)$ rounds since the underlying graph of semi-graph $G[E_2]$ (which does not contain edges of rank $\neq 2$) is graph $G[E_2]$ whose maximum degree is at most $k$. Similarly, the underlying graph of semi-graph $G[F_{i,j}]$ is graph $G[F_{i,j}]$ and hence each loop of Line 2 in \cref{edge-solving algorithm} runs in a constant number of rounds as any connected component of any $G[F_{i,j}]$ is a star graph.
    Moreover, the time to create the subgraphs $G[F_{i,j}]$ is $O(\log^* n)$ rounds and the execution of \Cref{Decomposition} terminates after $O(\log_{\frac{k}{a}}n)$ rounds, due to \Cref{lem:newdecompruntime}.
    Hence, the overall runtime is 
    \begin{align*}
        O(f(k)+a+\log^{*}n+\log_{\frac{k}{a}}n) &=O\left(a+f(k)+\frac{\log_{k}n}{\log_{k}\frac{k}{a}}+\log^*n\right)\\
        &\subseteq O\left(a+f(k)+\frac{f(k)}{\log_{k}\frac{k}{a}}+\log^*n\right)\\
        &=O\left(a+f(k)+\frac{f(k)}{1 - \log_{k} a}+\log^*n\right)\\
                             &= O\left(a+\frac{\rho \cdot f(g(n)^{\rho})}{\rho-\log_{g(n)}a}+\log^*n\right).
    \end{align*}
    
    The inclusion in the second line follows since $k=g(n)^\rho$ and hence $\log_kn \leq \log_{g(n)}n = f(g(n))\leq f(k)$.
    
    Consider a node $u$ and let $\chi$ be the multiset of output labels assigned by our algorithm to the half-edges incident on $u$. Refer to $\chi$ as the \emph{node configuration of $u$}. We say $\chi$ is a \emph{valid node configuration of $u$} for problem $\Pi$ if $\chi\in\mcN_{\Pi}^{\ndeg(u)}$. Similarly, consider an edge $e$ and let $\mset$ be the multiset of output labels assigned by our algorithm to the half-edges incident on $e$. Refer to $\mset$ as the \emph{edge configuration of $u$}. We say $\mset$ is a \emph{valid edge configuration of $e$} if $\mset\in\mcE_{\Pi}^{\rank(e)}$. Hence, a solution to $\Pi$ is correct if all the nodes and edges have valid configurations. We now proceed to show the correctness of \cref{edge-solving algorithm} by going through the following exhaustive cases.
    \begin{enumerate}[label=(\roman*)]
        \item Assume there is an edge $e$ of semi-graph $G$ on which the edge configuration is not valid for $\Pi$. Since \cref{edge-solving algorithm} solves $\Pi$ on $G[E_2]$ and $\Pi^*$ on all $G[F_{i,j}]$, all the half-edges incident on $e$ are labeled entirely by either Algorithm $\mathcal{A}$ or by Algorithm $\mathcal{A}_{i,j}$ for some $i$ and $j$.
        Algorithm $\mathcal{A}$ and Algorithms $\mathcal{A}_{i,j}$ correctly solve problems $\Pi$ and $\Pi^{*}$, respectively (due to the assumptions in \cref{thm:edgesveryformal}), and thus output for $e$ a configuration in $\mcE_{\Pi}^{2}$.
        This yields a contradiction.
        
        \item Assume there is a node $u$ of semi-graph $G$ on which the node configuration is not valid for $\Pi$. 

        Suppose Algorithm $\mathcal{A}$ has assigned labels to $r$ half-edges incident on $u$. Let the multiset of these labels be $\chi$. Since Algorithm $\mathcal{A}$ solves $\Pi$, $\chi\in\mcN_{\Pi}^{r}$. If $r=\ndeg(v)$, then the output of Algorithm $\mathcal{A}$ (and therefore also of the overall algorithm) for $u$ is   
        $\chi\in\mcN_{\Pi}^{\ndeg(v)}$, which implies $u$ has a valid node configuration. Hence $r<\ndeg(v)$.

        Let ${(i',j')}$ be the pair of indices corresponding to the earliest iteration in which $\mathcal{A}_{i',j'}$ assigns labels to $p' > 0$ half-edges incident on $u$.
        Let the multiset of those $p'$ labels be $\mset'$. Suppose that $r'$ half-edges incident on $u$ are already labeled beforehand. Let the multiset of those $r'$ labels be $\chi'$. We know that $\chi'\in\mcN_{\Pi}^{r'}$ due to the correctness of Algorithm $\mathcal A$. By definition, $\mset'\in\hin^{i',j'}(u)=\mcN_{\Pi,\chi'}^{p'}$) which implies that $\mset'\cup\chi'\in\mcN_{\Pi}^{r'+p'}$. If $r'+p'=\ndeg(u)$, the node configuration is valid. Hence, $r'+p'<\ndeg(u)$.
        Now let ${(i'',j'')}$ be the pair of indices corresponding to the next iteration in which $\mathcal{A}_{i'',j''}$ assigns labels to $p'' > 0$ half-edges incident on $u$. Algorithm $\mathcal{A}_{j''}$ can assume that the multiset of the labels of the $r'+p$ half-edges already labeled is in $\mcN_{\Pi}^{r'+p'}$ due to the correctness of Algorithms $\mathcal A$ and $A_{i',j'}$. By iterating these steps, we will obtain inductively that the final output assigned to $u$ when all algorithms $A_{i,j}$ have terminated is a valid node configuration of $u$. 
        This yields a contradiction.
 \end{enumerate}
\end{proof}    

\section{Implications for Concrete Problems}\label{sec:actualproblems}
In this section, we will finally apply~\Cref{thm:edgesveryformal} to obtain our new upper bound for $(\text{edge-degree} + 1)$-edge coloring.
We also show how our techniques can be used to obtain in a generic manner the tight $O(\log n/\log \log n)$-round upper bound for maximal matching known from~\cite{barenboim2013distributed}.

For both problems, our first step is to phrase the problems in the node-edge-checkability formalism for semi-graphs defined in~\Cref{sec:prelims}.
In other words, for either of the two problems we present a problem given in the formalism and then show that it is equivalent to $(\text{edge-degree} + 1)$-edge coloring, resp.\ maximal matching, in the sense that a solution to the problem given in the formalism on a semi-graph can be transformed in at most $1$ round into a solution to the edge coloring or matching problem on the underlying graph and vice versa.
Moreover, in order to show that the defined problems satisfy the conditions of ~\Cref{thm:edgesveryformal} we show that they can be solved in time $O(f(\Delta) + \log^* n)$ for suitable functions $f$, and that their node-list variants admit a valid solution on any valid instance.
We cover $(\text{edge-degree} + 1)$-edge coloring in~\Cref{sec:applicationedge} and maximal matching in \Cref{sec:applicationmatching}.

\subsection{$(\text{Edge-degree} + 1)$-Edge Coloring}\label{sec:applicationedge}
We define a node-edge-checkable problem $\Pi=(\Sigma,\mcN_{\Pi},\mcE_{\Pi})$ where
\begin{itemize}
    \item $\Sigma=\{(a,b)\mid a,b\in\mathbb{Z}_{> 0}\}\cup\{D\} $,
    \item $\mcN_{\Pi}=\{\mcN_{\Pi}^0,\mcN_{\Pi}^1,\mcN_{\Pi}^2,\dots\}$ where $\mcN_{\Pi}^i=\{\{(a_1,b_1),\dots,(a_p,b_p),D,D,\dots,D\}\mid a_1,b_1,\dots,a_p,b_p\in \mathbb{Z}_{> 0}, \text{ and for all } k,l,m\leq p \text{ with } l\neq m, \text{ we have } a_k\leq p,\text{ and } b_l\neq b_m\}$, and
    \item $\mcE_{\Pi}=\{\mcE_{\Pi}^0,\mcE_{\Pi}^1,\mcE_{\Pi}^2\}$ where 
    $\mcE_{\Pi}^0=\{\emptyset\}$, $\mcE_{\Pi}^1=\{\{D\}\}$, and $\mcE_{\Pi}^2=\{\{(a_1,b),(a_2,b)\}\mid a_1,a_2,b\in \mathbb{Z}_{> 0} \text{ and }a_1+a_2\geq b+1\}$.
\end{itemize}

It is straightforward to transform any valid solution to $\Pi$ on semi-graphs into a valid solution to $(\text{edge-degree} + 1)$-edge coloring (as defined outside of the context of node-edge-checkable problems) on graphs in $1$ round: simply assign to each edge $e$ of the graph the color $b$ that is the second entry of each of the two pairs that are outputted on the two half-edges incident to $e$ in an assumed correct solution to $\Pi$.
The constraints of $\Pi$ then immediately imply that the obtained edge coloring is proper and does not use a color that is greater than $\edgedeg(e) + 1$ for $e$.

Vice versa, given a semi-graph $S$ with underlying graph $G$, a valid solution for $(\text{edge-degree} + 1)$-edge coloring on $G$ can be straightforwardly transformed into a valid solution for $\Pi$ on $S$ in $1$ round: on each half-edge incident to an edge of rank $1$, simply output $D$; on the two half-edges incident to an edge $e = \{ u, v \}$ of rank $2$, output labels $(a_1,b)$ and $(a_2,b)$ (on half-edges $\{u, e\}$ and $\{v, e\}$, respectively) such that $b$ is the color of edge $e$ in the assumed solution on $G$, $a_1 + a_2 \geq b + 1$, $a_1 \leq \deg_G(u)$, and $a_2 \leq \deg_G(v)$.
Again, it follows directly from the definitions of the two problems that such an output label assignment is possible and must produce a valid solution.

Next, we consider the node-list variant $\Pi^*=\{\Sigma,\mcL_{\Pi},\mcE_{\Pi}\}$ of problem $\Pi$. By definition of $\Pi^{*}$, $\mcL_{\Pi}=\{\mcL_{\Pi}^1,\mcL_{\Pi}^2,\dots\}$ where $\mcL_{\Pi}^i=\{\mcN_{\Pi,\mset}^i \mid \mset\in \mcN_{\Pi}^{j}\text{ for some }j\in\mathbb{Z}\} $. We have that for $\mset=\{(x_1,y_1),\dots,(x_q,y_q),D\dots,D\}\in\mcN_{\Pi}^j$, 
\begin{equation*}
    \mcN_{\Pi,\mset}^i=\{\{(a_1,b_1),\dots,(a_p,b_p),D,\dots D\}\mid \{(a_1,b_1),\dots,(a_p,b_p),(x_1,y_1),\dots,(x_q,y_q),D\dots,D\}\in \mcN_{\Pi}^{i+j}\}.
\end{equation*}

For a label $(a,b)$ assigned to a half-edge $h$, we call $a$ the \emph{degree part} of the label and $b$ the \emph{color part} of the label. On a graph $G$, the constraints of problem $\Pi$ translate to the statement that for an edge $e$ incident on half-edges $h_1$ and $h_2$, the sum of the degree parts of $h_1$ and $h_2$ is at least $1$ greater than the color part which is common for $h_1$ and $h_2$.
We show the following lemma.

\begin{lemma}\label{lem:solveedge}
    For any given instance $(G,\hin)$, $\Pi^{*}$ admits a solution.
\end{lemma}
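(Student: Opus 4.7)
The plan is to construct a valid $\hout$ explicitly via a simple sequential greedy coloring procedure. Since validity of the instance $(G,\hin)$ ensures that for every node $u$ there is some phantom configuration $\psi(u) \in \mcN_\Pi^{j(u)}$ with $\hin(u) = \mcN_{\Pi,\psi(u)}^{\ndeg(u)}$, I will extract from $\psi(u)$ the number $q_u$ of non-dummy pairs, the set $Y_u$ of their color parts (which are pairwise distinct by the definition of $\mcN_\Pi^{j(u)}$), and note that every phantom degree part at $u$ is at most $q_u$. I will also let $p_u$ denote the number of rank-$2$ edges incident to $u$.

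Rank-$0$ edges carry no constraint, and rank-$1$ edges are trivially handled by setting the output label on the unique incident half-edge to $D$. For the rank-$2$ edges I will process them in an arbitrary sequential order. When processing $e=\{u,v\}$, I will pick a color $b \in \{1,2,\dots,p_u+q_u+p_v+q_v-1\}$ that avoids $Y_u \cup Y_v$ as well as every color already used on a previously processed rank-$2$ edge incident to $u$ or $v$. I then assign the label $(p_u+q_u,b)$ to the half-edge $\{u,e\}$ and the label $(p_v+q_v,b)$ to the half-edge $\{v,e\}$.

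Three things will need checking. First, such a $b$ always exists: the forbidden colors at $u$ number at most $q_u+p_u-1$ and symmetrically at $v$, summing to strictly less than $p_u+q_u+p_v+q_v$, so pigeonhole produces a valid $b$ in the stated window. Second, the edge constraint $a_1+a_2\geq b+1$ holds by construction because $a_1+a_2 = p_u+q_u+p_v+q_v \geq b+1$. Third, the node constraint at $u$ is satisfied because the output colors together with $Y_u$ are pairwise distinct, and every degree part in the combined multi-set (phantom parts $\leq q_u$, output parts all equal to $p_u+q_u$) is at most $p_u+q_u$, which equals the total number of non-dummy labels at $u$ in the combined configuration.

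The only mildly delicate point, and the one the argument hinges on, is the reconciliation of two opposing pressures: the per-node cap on degree parts (they must not exceed the total non-dummy count at the respective node) versus the per-edge requirement that the two degree parts sum to strictly more than the color. Choosing both degree parts to be exactly the maximum allowed value $p_u+q_u$ resolves the tension, since the resulting color-selection window of size $p_u+q_u+p_v+q_v-1$ is just barely large enough to accommodate all forbidden colors at $u$ and $v$ combined while still leaving at least one valid choice.
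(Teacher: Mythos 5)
Your proof is correct and follows essentially the same sequential greedy strategy as the paper: process rank-$2$ edges in an arbitrary order, pick a color from a window whose size exceeds the number of forbidden colors at the two endpoints, and choose degree parts large enough to license that color while staying below the per-node cap. The only (harmless) deviation is that you assign a fixed, maximal degree part $p_u+q_u$ to every output half-edge at $u$, whereas the paper uses an incrementing degree part $c+t$ determined by the processing order; both satisfy the identical invariant and the pigeonhole bound on the color window is the same.
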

\begin{proof}
    Let $(e_1,e_2,\dots,e_l)$ and $(f_1,f_2,\dots,f_m)$ be orderings of rank-$2$ edges and rank-$1$ edges respectively. We define a \emph{labeling process} that goes through edges sequentially and assign labels to half-edges incident on them.
    \begin{itemize}
        \item Go through edges $e_1,e_2,\dots, e_l$ sequentially. For an edge $e=e_{t},t\leq l$, do the following. Let $v_1$ and $v_2$ be the nodes incident on $e$. Suppose that degree of $v_1$ and $v_2$ are $i_1$ and $i_2$ respectively. Suppose $\hin(v_1)=\mcN_{\Pi,\mset_1}^{i_1}$ and $\hin(v_2)=\mcN_{\Pi,\mset_2}^{i_2}$ where $\mset_1\in \mcN_{\Pi}^{j_1}$ and $\mcN_{\Pi}^{j_2}$. Let $\mset_1=\{(x_1,y_1),\dots,(x_{b_1},y_{b_1}),D,\dots,D\}$ and $\mset_2=\{(w_1,z_1),\dots,(w_{b_2},z_{b_2}),D,\dots ,D\}$. Let $\chi_1=\{(r_1,s_1),\dots,(r_{d_1},s_{d_1})\}$ (for $d_1\leq i_1$) and $\chi_2=\{(p_1,q_1),\dots,(p_{d_2},q_{d_2})\}$ (for $d_2\leq i_2$) be the multisets of labels already assigned to incident half-edges on $v_1$ and $v_2$ respectively. Note $D\notin \chi_1$ and $D\notin \chi_2$ since only edges of rank $1$ have their incident half-edge labeled $D$ and none of them have chosen a label for their incident half-edges yet. Now,
        \begin{enumerate}[label=(\roman*)]
            \item Choose $c\in\{1,2,\dots,b_1+b_2+d_1+d_2+1\}$ such that $c\neq t$ for
            \[ t\in\{\ y_1,\dots,y_{b_1},z_1,\dots,z_{b_2},s_1,\dots,s_{d_2},q_1,\dots,q_{d_2}\}.\] 
            Note that this is always possible.
            \item Assign to $\{v_1,e\}$ and $\{v_2,e\}$, $(b_1+d_1+1,c)$ and $(b_2+d_2+1,c)$ respectively. Since $b_1+b_2+d_1+d_2\geq c+1$, this ensures edge configurations on rank $2$ edges are always valid.
        \end{enumerate}
        \item Go through $f_1,f_2\dots,f_m$ sequentially. For an edge $e=f_t,t\leq m$, let $\{v,e\}$ be the incident half-edge. Assign to $\{v,e\}$ the label $D$. This ensures that edge configurations on rank $1$ edges are always valid.
        
    \end{itemize}
   
   We now show that any node has a valid node configuration as well. Let $v$ be a node of degree $i$. Let $\hin(v)=\mcN_{\Pi,\mset}^{i}$ where $\mset=\{(a_1,b_1),\dots,(a_c,b_c),D,\dots,D)\in\mcN_{\Pi}^j\}$. Let the half-edges incident on $v$ be $h_1,\dots,h_i$ and assume that they were labeled in the labeling process in the order $h_1,h_2\dots,h_i$. Let their labels be $(x_1,y_1),\dots,(x_d,y_d),D,D\dots,D$ respectively. Such an assumption is valid since half-edges incident to edges of rank $1$ are labeled after half-edges incident to edges of rank $2$ and only half-edges incident to edges of rank $1$ have $D$ as their label. Now, consider the label $(x_t,y_t)$ assigned to $h_t$ for any $t\leq d$. By definition, the labeling process defines the degree part $x_t=c+(t-1)+1=c+t\leq c+d$, and the color part $y_t$ to be different from $y_1,\dots,y_{t-1},b_1,\dots, b_c$. Since $\mset\in\mcN_{\Pi}^j$, we have that $a_1,\dots,a_c\leq c$. Hence the cardinality $i+j$ multiset $\{(a_1,b_1),(a_2,b_2)\dots,(a_c,b_c),(x_1,y_1),(x_2,y_2)\dots,(x_d,y_d),D,D,\dots,D\}$ has the property that any two of its elements other than $D$ have their degree part at most $c+d$ and their color part different. Hence node $v$ has a valid configuration.
\end{proof}

Now combining the equivalence of $\Pi$ and $(\text{edge-degree} + 1)$-edge coloring shown above with \Cref{lem:solveedge}, \Cref{thm:edgesveryformal}, and the fact that \cite[Theorem D.4, arXiv version]{balliu2022edge} provides an $O(\log^{12}\Delta + \log^* n)$-round algorithm for $(\text{edge-degree} + 1)$-edge coloring, we obtain the desired result (where we use that the expression $\frac{\rho}{\rho-\log_{g(n)}a}$ from \Cref{thm:edgesveryformal} is upper bounded by a constant for $\rho=2$ and $a\leq g(n)$).

\edgecolthm*

\subsection{Maximal Matching}\label{sec:applicationmatching}
We define the problem $\Pi=(\Sigma,\mcN_{\Pi},\mcE_{\Pi})$ where 
\begin{itemize}
    \item $\Sigma=\{M,P,O,D\}$
    \item $\mcN_{\Pi}=\{\mcN_{\Pi}^0,\mcN_{\Pi}^1,\dots\}$ where $\mcN_{\Pi}^i=\{\{\chi_1,\chi_2,\dots,\chi_i\}\mid \chi_1,\chi_2,\dots,\chi_i\in\Sigma \text{ and either } (i)\ \chi_j=M \text{ for one }j\leq i, \chi_k\in\{P,O,D\} \text{ for } k\neq j, \text{ or } (ii)\ \chi_i\in\{O,D\} \text{ for all }k\leq i \}$.
    \item $\mcE_{\Pi}^{0}=\{\emptyset\}$, $\mcE_{\Pi}^1=\{\{D\}\}$, and $\mcE_{\Pi}^{2}=\{\{P,O\},\{M,M\},\{P,P\}\}$.
\end{itemize}

It is straightforward to transform any valid solution to $\Pi$ on semi-graphs into a valid solution to maximal matching on graphs in $1$ round: simply put an edge of the input graph into the matching if and only if the two labels assigned to the corresponding edge in the semi-graph on which a valid solution to $\Pi$ is assumed to be given are both $M$.
The constraints of $\Pi$ directly imply that the obtained solution is indeed a valid solution for maximal matching.

Vice versa, given a semi-graph $S$ with underlying graph $G$, a valid solution to maximal matching can be straightforwardly transformed into a valid solution for $\Pi$ on $S$ in $1$ round: on each half-edge incident to an edge of rank $1$, simply output $D$; for each edge $e$ of rank $2$ output $M$ on both incident half-edges if $e$ is matched in the given solution, and otherwise output
\begin{itemize}
    \item $O$ on each half-edge incident on $e$ for which the incident node is not matched in the given solution, and
    \item $P$ on each half-edge incident on $e$ for which the incident node is matched in the given solution.
\end{itemize}
Again, the definitions of $\Pi$ and maximal matching immediately imply that the obtained solution is indeed a valid solution on $S$.

Next, we consider the node-list variant $\Pi^*=\{\Sigma,\mcL_{\Pi},\mcE_{\Pi}\}$ of problem $\Pi$. By definition of $\Pi^{*}$, $\mcL_{\Pi}=\{\mcL_{\Pi}^1,\mcL_{\Pi}^2,\dots\}$ where $\mcL_{\Pi}^i=\{\mcN_{\Pi,\mset}^i:\mset\in \mcN_{\Pi}^{j}\text{ for some }j\in\mathbb{Z}\} $. We have that for $\mset=\{\mset_1,\mset_2,\dots,\mset_j\}\in\mcN_{\Pi}^j$,
\begin{equation*}
    \mcN_{\Pi,\mset}^{i}=\{\{\chi_1,\dots,\chi_i\}\mid\{\chi_1,\dots,\chi_i,\mset_1,\dots,\mset_j\}\in\mcN_{\Pi}^{i+j}\}.
\end{equation*}
We now claim the following.
\begin{lemma}\label{lem:solvematch}
    For any valid input instance $(G,\hin)$, $\Pi^{*}$ admits a valid solution.
\end{lemma}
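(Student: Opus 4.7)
The plan is a direct existence argument: reduce $\Pi^{*}$ for maximal matching to computing an ordinary maximal matching in an auxiliary subgraph, then label each half-edge accordingly. First, I partition $\Vs(G)$ by inspecting $\hin$. Since every $\mset_v \in \mcN_\Pi^{j_v}$ must itself satisfy the node constraint of $\Pi$, either (i) $\mset_v$ contains exactly one $M$-label or (ii) every label in $\mset_v$ lies in $\{O,D\}$. Call $v$ \emph{pre-matched} (set $V_M$) in the first case and \emph{pre-free} (set $V_F$) in the second. Let $H$ be the subgraph of $G$ on those rank-$2$ edges both of whose endpoints lie in $V_F$, and take any maximal matching $\mathcal{M}$ of $H$, which exists by a trivial sequential greedy construction.

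With $\mathcal{M}$ fixed, I would construct $\hout$ as follows. On each half-edge incident to an edge of rank $0$ or $1$, output $D$. For each rank-$2$ edge $e=\{u,v\}$: if $e \in \mathcal{M}$, output $M$ on both incident half-edges; otherwise, output $P$ on $\{u,e\}$ when $u \in V_M$ or $u$ is saturated by $\mathcal{M}$, and $O$ on $\{u,e\}$ otherwise (and symmetrically for $\{v,e\}$). Intuitively, $P$ signals ``this endpoint is covered'', so the non-matching rank-$2$ configurations become $\{P,P\}$ or $\{P,O\}$; the only potentially illegal configuration $\{O,O\}$ would require both endpoints to lie in $V_F$ and be unsaturated by $\mathcal{M}$, giving an edge of $H$ that could extend $\mathcal{M}$ and contradicting maximality. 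Thus every edge configuration lies in $\mcE_\Pi^{\rank(e)}$.

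Finally, I would check the node constraints by a case split on $v$. If $v \in V_M$, no $M$-label is ever newly placed at $v$ (since $v \notin V_F$, the rule never uses $M$ on half-edges at $v$), so the combined multiset retains the unique $M$ of $\mset_v$ with remaining labels in $\{P,O,D\}$, satisfying case (i) of $\mcN_\Pi^{\ndeg(v)+j_v}$. If $v \in V_F$ is saturated by $\mathcal{M}$, exactly one new $M$ appears at $v$ and the other new labels lie in $\{P,D\}$, again giving case (i) after combining with $\mset_v$. If $v \in V_F$ is unsaturated, all new labels lie in $\{O,D\}$, and combining with $\mset_v$ gives case (ii). Hence $\hout \in \hin(v)$ for every $v$. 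The only subtlety, which I do not expect to be a serious obstacle, is that the $P$/$O$ choice on each non-matching rank-$2$ edge must be coordinated so that both endpoint node-constraints and the edge constraint hold simultaneously; the rule above does this precisely because $P$ is assigned at exactly the ``covered'' sides.
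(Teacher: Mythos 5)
Your proof is correct, and it takes essentially the same route as the paper: identify the nodes whose input list already contains an $M$ (your $V_M$), greedily extend to a maximal matching among the remaining nodes, and then propagate $P$/$O$ according to which endpoints are ``covered.'' The paper implements this as a single sequential pass over the rank-$2$ edges (Case~1 of its labeling process is exactly the greedy construction of your $\mathcal{M}$ on $H$, and Cases~2/3 are your $P$/$O$ rule), whereas you cleanly factor the argument into ``fix any maximal matching of $H$, then assign labels in one shot.'' Your formulation is slightly tidier --- it makes the invariant (at most one $M$ per node, $\{O,O\}$ impossible by maximality of $\mathcal{M}$) explicit up front rather than maintaining it edge by edge --- but the underlying existence argument is the same. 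The one thing to make explicit for a polished writeup is that the partition $V_M \sqcup V_F$ is well-defined \emph{because} $\mset_v \in \mcN_\Pi^{j_v}$, which you do note; also recall that $G$ here is a semi-graph, so rank-$0$ edges carry no half-edges and rank-$1$ edges get $D$, exactly as you handle.
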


\begin{proof}
    Let $(e_1,e_2,\dots,e_l)$ and $(f_1,f_2,\dots,f_m)$ be orderings of rank $2$ edges and rank $1$ edges respectively. We define a \emph{labeling process} that goes through edges sequentially and fix the labels of half-edges incident on them.
    \begin{itemize}
        \item Go through edges $e_1,e_2,\dots,e_l$ sequentially. For an edge $e=e_t,t\leq i$, do the following. Let $v_1$ and $v_2$ be the nodes incident on $e$. Suppose $\hin(v_1)=\mcN_{\Pi,\mset_1}^{i_1}$ and $\hin(v_2)=\mcN_{\Pi,\mset_2}^{i_2}$ where $\mset_1\in \mcN_{\Pi}^{j_1}$ and $\mcN_{\Pi}^{j_2}$. Let $\mset_1=\{x_1\dots,x_{b_1},D,\dots,D\}$ and $\mset_2=\{w_1\dots,w_{b_2},D,\dots ,D\}$. Let $\chi_1=\{r_1\dots,r_{d_1}\}$ (for $d_1\leq i_1$) and $\chi_2=\{p_1,\dots,p_{d_2}\}$ (for $d_2\leq i_2$)be the multisets of labels already assigned to incident half-edges on $v_1$ and $v_2$ respectively. Note $D\notin \chi_1$ and $D\notin \chi_2$ since only edges of rank $1$ have their incident half-edge labeled $D$ and none of them have chosen a label for their incident half-edges. Now
        \begin{itemize}
            \item \emph{Case 1}: If $M\notin\{x_1,\dots ,x_{b_1},w_1,\dots,w_{b_2},r_1,\dots,r_{d_1},p_1,\dots,p_{d_2}\}$, then assign label $M$ to half-edges incident on $e$.
            \item \emph{Case 2}: If $M\notin\{x_1,\dots ,x_{b_1},r_1,\dots,r_{d_1}\}$ and $M\in\{w_1,\dots,w_{b_2},p_1,\dots,p_{d_2}\}$, then assign label $O$ to half-edge $\{v_1,e\}$ and label $P$ to half-edge $\{v_2,e\}$.
            \item \emph{Case 3:} If $M\in\{x_1,\dots ,x_{b_1},r_1,\dots,r_{d_1}\}$ and $M\in\{w_1,\dots,w_{b_2},p_1,\dots,p_{d_2}\}$, assign label $P$ to the half-edges incident on $e$.
        \end{itemize}
        \item Go through $f_1,f_2\dots,f_m$ sequentially. For each edge $e=f_t,t\leq m$, assign to the half-edge incident on $f_t$, the label $D$.
    \end{itemize}
    It is straightforward to verify that the all the edges satisfy the edge constraint by the definition of the process.

    We will now show that any node has a valid node configuration. Let $v$ be a node of degree $i$. Let $\hin(v)=\mcN_{\Pi,\mset}^i$ where $\mset=\{\mset_1,\mset_2,\dots,\mset_c,D,D,\dots,D\}\in\mcN_{\Pi}^{j}$. Let the half edges incident on $v$ be $h_1,h_2,\dots,h_i$ and assume they were labeled in that order itself by the process. Let their labels be $\chi_1,\chi_2,\dots,\chi_d,D,D,\dots,D$ respectively where $\chi_t\neq D$ for $t\leq D$. Such an assumption is valid since half-edges incident on edges of rank $2$ are labeled before half-edges incident on edges of rank $1$ are labeled. Assume that $v$ does not have a valid node configuration.
    Then one of the following cases must occur.
    \begin{itemize}
        \item \emph{Case A:} There exists two elements in the multiset $\{\mset_1,\dots,\mset_c,\chi_1,\dots,\chi_d\}$ that are $M$. Since $\mset\in \mcN_{\Pi}^j$ , at most one of $\mset_1,\dots,\mset_c$ is the label $M$. Hence one of the labels $\chi_1,\dots,\chi_i$ must be the label $M$. However, Case $1$ of the labeling process prevents this.
        \item \emph{Case B:} All of the labels of the multiset $\{\mset_1,\dots,\mset_c,\chi_1,\dots,\chi_d\}$ are not $O$. Consider such a label $\rho$. Assume $\rho=M$, then by argument in Case A, $\rho$ is the only element of the multiset $\{\mset_1,\dots,\mset_c,\chi_1,\dots,\chi_d\}$ that is $M$ and the node configuration of $v$ is valid. Hence, $\rho\neq M$. Then $\rho=P$. If $P\in\mset$, this implies that $M\in \mset$ (since $\mset\in \mcN_{\Pi}^j$) which would imply that the node configuration of $u$ is valid. Hence $P\in\{\chi_1,\dots\chi_d\}$ which implies one of the half-edges incident on $u$ is labeled $P$. Let half-edge $h_x$ be that label. But Cases $2$ and $3$ in the labeling process ensure that $h_x$ is labeled $P$ if $M\in\{\mset_1,\dots,\mset_c,\chi_1,\dots,\chi_d\}$. This again implies that the node configuration of $u$ is valid, yielding a contradiction.
    \end{itemize}
\end{proof}

Now combining the equivalence of $\Pi$ and maximal matching shown above with \Cref{lem:solvematch}, \Cref{thm:edgesveryformal}, and the fact that \cite{panconesi2001some} provides an $O(\Delta + \log^* n)$-round algorithm for maximal matching, we obtain the desired result that maximal matching can be solved on trees in $O(\log n/\log \log n)$ rounds (using a simplification analogous to the one in the case of $(\text{edge-degree} + 1)$-edge coloring).

We remark that with a similar approach as for $(\text{edge-degree} + 1)$-edge coloring and maximal matching, it is also straightforward to phrase the other problems mentioned throughout the paper (such as MIS, $(\deg + 1)$-coloring, $(\Delta + 1)$-coloring and $(2 \Delta - 1)$-edge coloring) in the described node-edge-checkability formalism.
In fact, as round elimination also requires problems to be phrased in node-edge-checkable form---though without the minor additional technicality that comes due to semi-graphs---, a description in this form on semi-graphs can be usually derived from the respective round elimination considerations in the literature.

\section*{Acknowledgements}
Funded by the European Union. Views and opinions expressed are however those of the author(s) only
and do not necessarily reflect those of the European Union or the European Research Council. Neither the European Union nor the granting authority can be held responsible for them. This work is supported by ERC grant OLA-TOPSENS (grant agreement number 101162747) under the Horizon Europe funding programme.

\urlstyle{same}
\bibliographystyle{alpha}
\bibliography{references}

\end{document}